%

\documentclass[11pt, a4paper]{amsart} 
\def\N{{\rm I\kern-0.16em N}}
\def\R{{\rm I\kern-0.16em R}} 
\def\E{{\rm I\kern-0.16em E}}
\def\P{{\rm I\kern-0.16em P}} 
\def\F{{\rm I\kern-0.16em F}}
\def\B{{\rm I\kern-0.16em B}} 
\def\Q{{\rm I\kern-0.16em Q}} 
\def\C{{\rm I\kern-0.16em C}}

\font\eka=cmex10
\def\ind{\mathrel{\hbox{\rlap{%
\hbox to 7.5pt{\hrulefill}}\raise6.6pt\hbox{\eka\char'167}}}}
\def\L {{\rm I\kern-0.16em L}}
\newtheorem{example}{Example}[section]
\newtheorem{theorem}{Theorem}[section]
\newtheorem{remark}{Remark}[section]
\newtheorem{lemma}{Lemma}[section]
\newtheorem{corollary}{Corollary}[section]

\newtheorem{proposition}{Proposition}[section]
\newtheorem{definition}{Definition}[section]
\numberwithin{equation}{section}
\usepackage{amssymb}
\usepackage{amsmath}
\usepackage{verbatim}
\usepackage{graphicx,color}
\parindent0pt

\begin{document}

\markboth{Gasbarra, Morlanes, Valkeila}{Initial enlargement in a Markov chain market model}
\date{\today}
\title{Initial enlargement in a Markov chain market model}
\author{Dario Gasbarra}
\address{University of Jyv\"askyl\"a, Department of Mathematics and Statistics\\
PL 35 (MaD) 40014 University of Jyv\"askyl\"a\\
Finland\\
dario.gasbarra@jyu.fi}
\author{Jos\'e Igor Morlanes}
\address{ Aalto University, Department  of Mathematics and Systems Analysis\\
P.O. Box 11100, FI-00076 Aalto\\  
Finland\\
Igor.Morlanes@tkk.fi}
\author{Esko Valkeila}
\address{Aalto University, Department  of Mathematics and Systems Analysis\\
P.O. Box 11100,  FI-00076 Aalto\\
Finland\\
Esko.Valkeila@tkk.fi}

\maketitle


\begin{abstract}
Enlargement of filtrations is a classical topic in the general theory of stochastic processes.
This theory has been applied to stochastic finance in order to analyze models with insider information.
In this paper we study initial enlargement in a Markov chain market model, introduced by R. Norberg. 
In the enlarged filtration several things can happen: some of the jumps times  can 
be accessible or predictable, but in the original filtration all the jumps times are totally inaccessible.
But even if the jumps times change to accessible or predictable, the insider does not necessarily have arbitrage 
possibilities.
\end{abstract}

\keywords{ Markov chain market model, initial enlargement, jump times, insider information. 
AMS Subject Classification: 60H30, 60H99, 60G40, 91B25}

\section{Introduction}  
Enlargement of filtrations is a classical topic in the general theory of stochastic processes
\cite{Jeulin1980}. 
This theory has been applied to stochastic finance in order to analyze models with insider information
(see for example \cite{Ankirchner2005,Ankirchner2008,adi2006,Gasbarra2006}).
In this paper we study initial enlargement in a Markov chain market model, introduced by R. Norberg
\cite{Norberg2005}.  In this model the state of economy is modeled by a finite state Markov chain, 
and the state of economy determines the dynamics for the risky assets. 

The ordinary agent has the information described by the filtration generated by 
an observable  process, but the insider has the additional information given by  a certain random variable.

We assume that the ordinary agent has no arbitrage possibilities. 
Then, in the initial enlargement  the following things can happen; 
\begin{itemize}
\item In the original filtration the jump times are totally inaccessible, but in the enlarged filtration 
there can be accessible and predictable jump times.
\item Independently of the possible changes in the properties of jump times, the insider may have arbitrage 
possibilities, or may not have arbitrage possibilities.
\end{itemize}

The motivation for this study comes from the jump model example introduced by
A. Kohatsu-Higa \cite{Kohatsu-Higa2007}.  Our results show some additional features in the enlargement theory
for processes with jumps. 
\section{Markov Chain Market Model}
\subsection{States of the economy}
We describe the model introduced by  R. Norberg. 
We work with  probability space $(\Omega, \mathcal{F},\mathbb{P})$. 

The state of the economy is given by a  process $Y$. Next we list the 
properties of the process $Y$: $Y=(Y_t)_{t\geq0}$ is a time-homogeneous Markov process with 
finite state space $\mathcal{Y}=\{1,\cdots,n\}$ and the paths of $Y$ are right continuous
with left-hand limits.

We denote the transition probabilities by 
$P_t^{ek} = \mathbb{P}(Y_{t+s}=k\big\vert Y_s=e) $, $s\ge 0$,      ${P}_t=\{P^{ek}_t\} $ 
is the transition matrix and  $\mathbf{\Lambda}=\{\lambda^{ek}\}$
is the intensity matrix.

The states of the Markov chain determine the dynamics of the risky assets.
\subsection{Dynamics of risky assets}
The market model has $m+1 $ assets $S = (S^0, S^1 , \dots , S^m)$. We describe their dynamics 
with the help of the state process $Y$.

The counting process 
$$
N_t^{ek}=\#\{\tau: 0<\tau\leq t, Y_{\tau-}=e , Y_{\tau}=k\}
$$
counts direct transitions of $Y$ from state $e$ to state $k$ during the time interval $(0,t]$.

The bank account $S^0 _t $ has dynamics
$$
S^0_t = \exp\left( \int^t_0 r_u du \right) = \exp\left( \sum_e  \int^t_0 r^e \textbf{1}_{\{Y_{u-}=e\}} du\right),
$$
where $r_t = r^{Y_t}$, i.e. the short rate depends on the state of the economy. 

The rest of the assets have  the following dynamics
$$
S^i_t = S^i_0\exp \left(\sum_e \biggl\{
 \int^t_0  \mu^{ie}\textbf{1}_{\{Y_{u-}=e\}} du + \sum_{k}\beta^{iek} N_t^{ek}\biggr\} \right) ,
$$
where $\mu^{ie}\in \mathbb R , \beta ^{iek}\in \mathbb R$. 
Then the logarithmic discounted prices $L^i  = \log ({S^i }/{S^0}) $ have dynamics 
$$
dL_t^i =\sum_e (\mu^{ie}-r^e) \textbf{1}_{\{Y_{t-}=e\}}dt + \sum_e \sum_{k}\beta^{iek}dN_t^{ek} .
$$

Later we shall work with three different filtrations: with the filtration generated by the 
Markov chain process $Y$, which we denote by $\mathbb F$, or by $\mathbb F ^Y $,  and with 
two initially enlarged filtrations, which we denote by $\mathbb G $ and $\mathbb G ^{ H }$. 
We will specify the filtrations $\mathbb G$ and $\mathbb G ^{H}$ later. 
Note that the filtration $\mathbb F$ is also generated by the counting processes $N^{ek}$, 
$e,k\in \mathcal Y $ with $e\neq k$.

\subsection{No-arbitrage criterion in the Norberg model}
From the definition of the model we have that  
\begin{eqnarray*}  M^{ek}_t= 
 N^{ek}_{t}-\int_0^t\lambda^{ek}\mathbf{1}( Y_s=e)  ds,
\end{eqnarray*}
are  mutually orthogonal $(\mathbb{F},\mathbb{P})$-martingales: 
indeed we have for $e,k,l,p\in \mathcal Y $
when $k\neq p$ or $e\neq l$ that $[M^{ek},M^{lp}]_T=\sum _{s\le T} \Delta M^{ek}_s \Delta M^{lp}_s = 0$ a.s., 
and this implies mutual orthogonality 
in the sense of \cite[Definition VI.6.12]{Chinese}.

\begin{definition}\label{d:equivalence} The  intensity matrices
 $\mathbf{\Lambda}=(\lambda^{ef} ) $ and $\mathbf{\widetilde{\Lambda}}= (\widetilde \lambda^{ef} )  $ 
are equivalent when $\forall e,f\in  {\mathcal Y }$,
$  \lambda^{ef} > 0  \Longleftrightarrow   \widetilde \lambda^{ef} > 0 $.
\end{definition} 

In order to make the discounted stock price process a martingale
we should have that the new intensity $\widetilde \lambda ^{ek}$ satisfies 
for all $e \in \mathcal{Y}$ :
$$
\mu^{ie}-r^e = -\sum_{k\in \mathcal{Y}^e}\gamma^{iek}\widetilde{\lambda}^{ek},\qquad  i=1\dots m ,
$$ 
where $\mathcal{Y}^e=\{k: \lambda^{ek}>0\}$ is the set of states directly reachable from state $e$,
and $\gamma ^{ek} = e^{\beta ^{ek}}- 1 $. 
Rewrite this in matrix form as 
$$
\leqno(\mbox{NA}) \ \ \ \ \  
r^e {\bf 1} - \mu ^e = \Gamma ^e \widetilde \lambda ^e ,
$$
where $e=1,\ldots,n$, $\mathbf{1}$ and $\mu^e=\left( \mu^{ie}\right)_{i=1,\ldots,m}$ are $1\times m$ row vectors,

$\mathbf{\Gamma}^e=\left( \gamma^{ief}\right)_{i=1,\ldots,m}^{f\in\mathcal{Y}^e} $,  
$\widetilde{\mathbf{\lambda}}^e=\left( \widetilde{\lambda}^{ef}\right)_{f\in\mathcal{Y}^e}$.
We can now summarize the situation:
\begin{proposition}\cite{Norberg2005}
 Assume that we can find $\widetilde \Lambda $, equivalent to $\Lambda$, such that 
(NA) holds, then defining  $\mathbb Q$ by $d  \mathbb Q _t = Z_t d\mathbb P _t $ with the density 
\begin{eqnarray*}
Z_t &= &  \exp\biggl(  \sum_{e\in \mathcal{Y}}  
\sum_{k\in \mathcal{Y}^e} \bigl( \log( \widetilde  \lambda^{ek} )-
\log (\lambda^{ek} )  \bigr) N_t^{ek} \biggr) \\
& & \times \exp \biggl(
 \sum_{e\in \mathcal{Y}} \sum_{k\in \mathcal{Y}^e} 
  \int_0^t(\lambda^{ek}-\widetilde  \lambda^{ek}) \textbf{1}(Y_{s-}=e) ds \biggr) .
\end{eqnarray*}
we obtain a martingale measure for the Norberg market model.
\end{proposition}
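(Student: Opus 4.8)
The plan is to verify the claim in three stages: (i) show that $Z$ is a strictly positive $(\mathbb F,\mathbb P)$-martingale, so that $d\mathbb Q_t=Z_t\,d\mathbb P_t$ consistently defines a probability measure equivalent to $\mathbb P$; (ii) identify the $\mathbb Q$-compensators of the counting processes $N^{ek}$ by a Girsanov argument; and (iii) compute the $\mathbb Q$-dynamics of the discounted prices $\widetilde S^i=S^i/S^0$ and invoke (NA) to kill the drift.

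For (i) I would recognize $Z$ as the Dol\'eans--Dade exponential $Z=\mathcal E(X)$ of the $(\mathbb F,\mathbb P)$-local martingale
$$X_t=\sum_{e\in\mathcal Y}\sum_{k\in\mathcal Y^e}\Bigl(\frac{\widetilde\lambda^{ek}}{\lambda^{ek}}-1\Bigr)M^{ek}_t .$$
Writing $\mathcal E(X)_t=\exp(X^c_t)\prod_{s\le t}(1+\Delta X_s)$, the continuous finite-variation part $X^c$ reproduces the second exponential in the definition of $Z_t$ and the product over the jump times of $Y$ reproduces the first. Since $\widetilde\Lambda$ is equivalent to $\Lambda$, every jump of $X$ equals $\widetilde\lambda^{ek}/\lambda^{ek}-1>-1$, so $Z>0$. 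Because $Y$ has only finitely many jumps on each compact interval and the intensities are bounded, $N^{ek}_T$ is dominated by a Poisson count and hence has exponential moments of all orders on a finite horizon; this yields $\mathbb E_\mathbb P[Z_T]=1$ for every $T$ (by localization together with the dominated convergence bound $Z_{T\wedge\tau_n}\le e^{CT}\rho^{N_T}$, or by a direct computation on the Markov chain), so $Z$ is a true martingale.

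For (ii), the Girsanov theorem for (marked) point processes shows that under $\mathbb Q$ the $\mathbb P$-intensity $\lambda^{ek}\mathbf 1(Y_{s-}=e)$ of $N^{ek}$ is replaced by $\widetilde\lambda^{ek}\mathbf 1(Y_{s-}=e)$; equivalently,
$$\widetilde M^{ek}_t:=N^{ek}_t-\int_0^t\widetilde\lambda^{ek}\mathbf 1(Y_{s-}=e)\,ds$$
is an $(\mathbb F,\mathbb Q)$-martingale for every $e,k$. For (iii), apply It\^o's formula to $\widetilde S^i_t=S^i_0\exp(L^i_t)$: the continuous drift of $L^i$ contributes $\sum_e(\mu^{ie}-r^e)\mathbf 1(Y_{t-}=e)\,dt$, while a transition $e\to k$ multiplies $\widetilde S^i$ by $e^{\beta^{iek}}=1+\gamma^{iek}$, so
$$d\widetilde S^i_t=\widetilde S^i_{t-}\Bigl[\sum_e(\mu^{ie}-r^e)\mathbf 1(Y_{t-}=e)\,dt+\sum_e\sum_{k\in\mathcal Y^e}\gamma^{iek}\,dN^{ek}_t\Bigr].$$
Substituting $dN^{ek}_t=d\widetilde M^{ek}_t+\widetilde\lambda^{ek}\mathbf 1(Y_{t-}=e)\,dt$, the $dt$-part of the bracket becomes $\sum_e\bigl(\mu^{ie}-r^e+\sum_{k\in\mathcal Y^e}\gamma^{iek}\widetilde\lambda^{ek}\bigr)\mathbf 1(Y_{t-}=e)$, which vanishes by (NA). Hence $d\widetilde S^i_t=\widetilde S^i_{t-}\sum_{e,k}\gamma^{iek}\,d\widetilde M^{ek}_t$ is an $(\mathbb F,\mathbb Q)$-local martingale, and the finite-jump / bounded-intensity argument from (i) upgrades it to a genuine martingale; thus $\mathbb Q$ is a martingale measure.

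The step I expect to be the real obstacle is the true-martingale property in (i), together with its analogue for $\widetilde S^i$ in (iii): stochastic exponentials of purely discontinuous martingales need not have expectation one, so one must exploit the special structure of the model — finite state space, deterministic bounded intensities, and the resulting exponential integrability of the jump counts on any bounded time interval — rather than appeal to a generic criterion. Everything else is a routine It\^o computation plus the algebraic identity (NA).
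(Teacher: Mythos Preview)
The paper does not actually prove this proposition: it is stated with the citation \cite{Norberg2005} and used as background, so there is no ``paper's own proof'' to compare against. Your three-step plan (identify $Z=\mathcal E(X)$ for $X=\sum_{e,k}(\widetilde\lambda^{ek}/\lambda^{ek}-1)M^{ek}$, apply Girsanov for point processes, then kill the drift via (NA)) is exactly the standard argument behind such results and is correct. Your verification that the continuous finite-variation part and the jump product of $\mathcal E(X)$ reproduce the two exponentials in the definition of $Z_t$ is accurate, and the equivalence of $\widetilde\Lambda$ and $\Lambda$ indeed guarantees $1+\Delta X>0$, hence $Z>0$.

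Your own assessment of where the difficulty lies is also right: the only non-routine point is the passage from local to true martingale for $Z$ (and, if one insists on a genuine martingale measure rather than a local one, for $\widetilde S^i$). Your sketch---bound $Z_{T\wedge\tau_n}\le e^{CT}\rho^{N_T}$ with $\rho=\max_{e,k}\widetilde\lambda^{ek}/\lambda^{ek}$ and use that the total jump count $N_T$ is stochastically dominated by a Poisson$(\lambda_{\max}T)$ variable via the standard uniformization/thinning construction of a finite-state chain---is the right idea and works; you may want to make the Poisson domination explicit rather than leave it as a parenthetical. The analogous bound for $\widetilde S^i$ follows in the same way since the jump multipliers $1+\gamma^{iek}$ are finitely many fixed constants.
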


Without loss of generality we can assume that the state dependent interest rate $r^e= 0$.

We give two basic examples, which we use to illustrate various aspects of the Norberg model
in connection to initial enlargement.

\begin{example}\label{ex:one}
In \cite{Kohatsu-Higa2007}, Kohatsu-Higa introduced the following model for the stock price:
$$
L_T = \log (S_T)=L_0 + \beta ^+ N^+_T + \beta ^- N^- ;
$$
here $\beta ^+ > 0 , \beta ^- < 0 $, $N^+ $ and $N^- $ are  Poisson process with respective intensities $\lambda ^+ $
and $\lambda ^- $, counting respectively the upward and downward jumps, respectively.

This can be put  in the Norberg model as follows. The state space is  
$\mathcal Y = \{ 1,2,3 \}$, and there  is  one stock $S$.

The parameters are $\beta ^+>0$ $\beta ^-<0$, and the drift $\mu = \mu ^i $, for $i=1,2,3$. Take 
\begin{eqnarray*}
\frac{ dS_t }{ S_t} = \gamma^+ ( dN^{12}_t + dN^{23}_t + dN^{31}_t ) +\gamma^- ( dN^{13}_t + dN^{21}_t + dN^{32}_t ) + \mu dt 
\end{eqnarray*}
where $\gamma^{\pm} =\bigl(\exp(\beta^{\pm})-1\bigr)$ and
\begin{eqnarray*}
 dN_t^{ij} - {\bf 1}( Y_{t-} = i  ) \lambda^{ij} dt 
\end{eqnarray*}
are martingale increments for $i\ne j$ under the  measure $P$ with $\lambda^{ij} >0$.

Take now $\lambda^{12}=\lambda^{23}= \lambda ^{31}=\lambda ^+ $ and 
$\lambda^{21}=\lambda ^{32} = \lambda ^{13}=\lambda ^- $. Then
the aggregated processes
$N^+_t = (N^{12}_t+N^{21}_t+N^{31}_t)$ and $N^-_t = ( N^{13}_t+N^{21}_t+N^{32}_t)$ 
have  deterministic compensators 
\begin{eqnarray*} 
  \sum_{i=1}^3 \int_0^t  \lambda^{\pm}{\bf 1} (Y_{s-} = i) ds = \lambda^{\pm} t
\end{eqnarray*}
and since $[N^+ ,N^-]=0$ a.s., by Watanabe's characterization $N^+,N^-$ are
independent Poisson processes (see \cite{Jacod-Shiryaev2003}).

To check the (NA) condition we find $\widetilde\lambda^{\pm} >0 $
such that
\begin{eqnarray*} 
  \gamma^{+ } \widetilde \lambda^{+} + \gamma^{- } \widetilde \lambda^{-}  +\mu =0 
\end{eqnarray*}
and then we obtain an equivalent risk-neutral measure  $\mathbb Q$  with intensities
 $\widetilde \lambda^{12 } =\widetilde \lambda^{23 } = \widetilde \lambda^{31 } = \widetilde \lambda^+$,
and $\widetilde \lambda^{21 } =\widetilde \lambda^{32 } = \widetilde \lambda^{13 } = \widetilde \lambda^-$.

The model is incomplete: 
\begin{equation} \label{eq:n-a}
\widetilde\lambda^{-}  =    -\biggl(
 \mu+ \gamma^{+} \widetilde \lambda^{+} \biggr) \bigg / \gamma^->0
\end{equation}
is a solution
for any fixed $\mu$ and  large enough $\widetilde\lambda^{+}>0$, since $\gamma^{-}<0$
and $\gamma^+>0$. Hence there are many martingale measures.

\end{example}

\begin{example}\label{ex:two}
Later we will illustrate what can happen in the initial enlargement using the following model.
\begin{itemize} 
\item The economy can be in two different states: $\mathcal Y = \{1,2 \}$.
\item Write $\mu^- = \mu ^1 $, $\mu ^+ = \mu ^2 $, $\lambda ^+ = \lambda ^{12} $, $\lambda^{-} = \lambda ^{21}$ and similarly with $\gamma ^+ $, $\gamma ^-$,
$N^+ $ and $N^-$.
\item Assume that $\mu^+ > 0$, $\gamma ^+ > 0$, $\mu ^- < 0 $ and $\gamma ^- <  0 $. 
\item We have only one stock and 
$$
L_T = \log (S_T) =L_0 + \beta  ^+ N_T^+ + \beta ^-N^-_T + \int _0^T \mu ^ {Y_u}du .
$$  
\end{itemize}
It is easy to see that the (NA) condition holds for the assumed parameter values.

Note that here, in contrast to Example \ref{ex:one}, the processes $N^+ $ and $N^-$ are not independent.
\end{example}

\subsection{An alternative description of the model}
\label{ss:scenario}
The randomness of the model comes from the finite state Markov processes $Y$. Alternatively,
we can consider the matrix valued counting process $\mathbf N = (N^{el})_{e,l\in \mathcal Y , e\neq l}$,
where $N^{el}$ counts the direct transitions from state $e $ to state $l$. 

The other possibility is  
to consider a single counting process $N$, where $N = \sum _{e,l}N^{el}$, and keep track, 
how the Markov process $Y$ behaves at the jump times of $N$. More precisely, 
this information is given by the \emph{scenarios}.
A scenario  $h=( n; e_0,e_1,\dots, e_{n} )$ gives information about the associated Markov chain, 
here $n\ge 0 $ is the number of changes 
in the economy, $e_0$ is the initial state, and
 and $e_i\neq e_{i+1}$, $i=1, \dots , n-1$, are the states of the economy in the scenario $h$. 
For example, $(0;e_0)$ is the scenario, 
where  there are no changes in the economy. Notation: $e^{0:n} = e_0,e_1,\dots, e_{n} $.
The random scenario $H_T = (N_T ; Y_0, Y_{\tau _1}, \dots , Y_{N_T}) $, where 
$\tau _k $ is the k$^{th}$- jump time of the counting process $N$, together with the aggregated 
counting process $N$ defined above, has the same information as the matrix valued counting 
process  $\mathbf N$. The scenarios  will be useful for us both in some computations and 
in the analysis of the initial enlargement.

\section{Calculation of the insider's compensator: Classical theory}\label{insider dynamics}

\subsection{A martingale representation result}

Let $(\Omega,{\mathcal F},\mathbb{P},\mathbb{F})$ be a filtered probability space 
with $\mathbb F=(F_t)_{t\geq 0}$ any filtration, not necessarily the 
filtration generated by the counting process $\mathbf N $. 
Next we study how the the compensator $\Lambda ^{ek}$ of the counting process 
$N^{ek}$ is computed in the initially enlarged filtration, $e,k\in \mathcal Y , \ e\neq k$. 
We shall simply write $N$ and $\Lambda $, instead of $N^{ek}$ and $\Lambda ^{ek}$. 

So, assume that $N=(N_t)_{t\geq 0}$ is a $\mathbb{F}$-adapted counting process with 
$\mathbb F$-compensator $\Lambda=(\Lambda_t) _{t\geq 0}$.
We consider, in the next subsection, an enlargement of the filtration $\mathbb{F}$ by a 
random variable $\vartheta$.
To be able to compute the compensator of the process $N$ in this enlarged filtration, 
we need a few results given below.

We use the notation ${}^p X$ (respectively ${}^o X$) to be the predictable (respectively optional) 
projection of $X$ and $X^p$ the dual predictable projection (respectively $X^o$) w.r.t 
$(\mathbb{F}, \mathbb{P})$, unless otherwise stated. 

The next lemma is a version of the martingale representation theorem in our context. 

\begin{lemma}\label{MRT}
Let $N$ be a counting process with continuous compensator $\Lambda=N^p$ w.r.t. the filtration $\mathbb F$. Denote 
by $\widetilde N = ( N-\Lambda)$  the compensated process. Then every $\mathbb R^d$ valued $\mathbb F$-local martingale $(M_t)_{t\geq 0}$ has the representation 
\begin{eqnarray*} 
 M_t=M_0 + \int_0^t (\widehat M_s - M_{s-}) d\widetilde N_s + U_t
\end{eqnarray*} 
where $\widehat M$ is $\mathbb F$-predictable,
and $(U_t)_{t\geq 0}$ is a $\mathbb F$-local martingale with  $\langle \widetilde N, U \rangle =0$.
\end{lemma}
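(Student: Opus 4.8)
The plan is to reduce this to the classical martingale representation theorem for the marked point process (or counting process) setting. First I would recall the general principle: if $N$ is a counting process with continuous compensator $\Lambda$ with respect to $\mathbb F$, then the compensated martingale $\widetilde N = N - \Lambda$ is a purely discontinuous $\mathbb F$-martingale, and its jumps occur at the jump times of $N$, each of size $1$. Given an $\mathbb R^d$-valued $\mathbb F$-local martingale $M$, I would decompose it into its continuous and purely discontinuous parts, $M = M_0 + M^c + M^d$, and within $M^d$ isolate the part ``carried'' by the jumps of $N$. Concretely, the natural candidate for the integrand is obtained from the jumps of $M$ at the jump times of $N$: one sets $\widehat M_s$ to be an $\mathbb F$-predictable process such that $\widehat M_{\tau_j} = \mathbb E[M_{\tau_j}\mid F_{\tau_j-}]$ at each jump time $\tau_j$ of $N$, so that $\widehat M_s - M_{s-}$ is precisely the predictable projection of the jump $\Delta M_s$ on $\{\Delta N_s = 1\}$. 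The existence of such a predictable process is the standard fact that any optional process can be written, on the graph of a predictable time, via a predictable process; here it is cleaner to invoke that on the thin predictable set $\{\Delta \Lambda > 0\}$ — which is empty since $\Lambda$ is continuous — there is nothing to correct, and the jump times of $N$ are totally inaccessible, so $\widehat M - M_-$ exists as the $\mathbb F$-predictable projection of $\Delta M \mathbf 1_{\{\Delta N = 1\}}$.

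Next I would verify that $\int_0^t (\widehat M_s - M_{s-})\, d\widetilde N_s$ is a well-defined $\mathbb R^d$-valued $\mathbb F$-local martingale: this follows because $\widehat M - M_-$ is $\mathbb F$-predictable and locally bounded (or at least integrable against $d\Lambda$ after localization, using that $M$ is a local martingale and $\Delta M$ is therefore locally integrable), so the stochastic integral against the martingale $\widetilde N$ is again a local martingale by the standard integration theory for counting-process martingales. Then I would define $U := M - M_0 - \int_0^\cdot (\widehat M_s - M_{s-})\, d\widetilde N_s$, which is automatically an $\mathbb F$-local martingale as the difference of two such. The crucial orthogonality $\langle \widetilde N, U\rangle = 0$ I would check at the level of the (predictable) quadratic covariation: since $\widetilde N$ is quadratic pure-jump with $[\widetilde N, \widetilde N]_t = N_t$ and predictable bracket $\langle \widetilde N, \widetilde N\rangle_t = \Lambda_t$, and since by construction the jump of $U$ at each jump time of $N$ is $\Delta U_{\tau_j} = \Delta M_{\tau_j} - (\widehat M_{\tau_j} - M_{\tau_j -}) = \Delta M_{\tau_j} - \mathbb E[\Delta M_{\tau_j}\mid F_{\tau_j-}]$, the predictable projection of $\Delta U$ on $\{\Delta N = 1\}$ vanishes; hence the predictable covariation $\langle \widetilde N, U\rangle$, which is the dual predictable projection of $[\widetilde N, U]_t = \sum_{s \le t} \Delta N_s\, \Delta U_s$, is zero. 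Away from the jump times of $N$ there is no contribution to $[\widetilde N, U]$, so this completes the identity.

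The main obstacle I anticipate is the careful justification that the integrand $\widehat M - M_-$ can be taken $\mathbb F$-predictable, rather than merely optional, together with the associated integrability needed to make the stochastic integral a genuine (local) martingale — in the counting-process setting this rests on the fact that the compensator $\Lambda$ is continuous (so the jump times are totally inaccessible and there is no predictable ``mass'' to worry about), and on a localization argument controlling $\mathbb E\!\left[\int_0^t \|\widehat M_s - M_{s-}\|\, d\Lambda_s\right]$. A secondary technical point is to make sure the decomposition is unique enough that $U$ inherits orthogonality to $\widetilde N$ rather than merely being well-defined; this is handled by the bracket computation above. I would cite the martingale representation results for marked point processes (e.g.\ the version in \cite{Jacod-Shiryaev2003}) and the general theory of predictable projections for the existence of $\widehat M$.
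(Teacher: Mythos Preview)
Your overall strategy---decompose $M$, isolate the jump contribution carried by $N$, and show the remainder $U$ satisfies $\langle \widetilde N,U\rangle=0$---is the same as the paper's. The orthogonality argument via $[\widetilde N,U]_t=\sum_{s\le t}\Delta N_s\,\Delta U_s$ and its dual predictable projection is also essentially what the paper does (it splits $M^d$ into pieces $M^{d,0}$ and $M^{d,1}$ according to whether $\Delta N=0$ or $1$, and computes brackets for each).

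There is, however, a genuine gap in your construction of $\widehat M$. You propose taking $\widehat M-M_-$ to be ``the $\mathbb F$-predictable projection of $\Delta M\,\mathbf 1_{\{\Delta N=1\}}$''. But since $\Lambda$ is continuous, the jump times of $N$ are totally inaccessible, so for every \emph{predictable} time $\tau$ one has $\Delta N_\tau=0$ a.s.; consequently ${}^p\bigl(\Delta M\,\mathbf 1_{\{\Delta N=1\}}\bigr)\equiv 0$. This is not the object you want. Your earlier description---a predictable process with $\widehat M_{\tau_j}=E[M_{\tau_j}\mid F_{\tau_j-}]$---is closer, but only pins down $\widehat M$ on the graphs $[\![\tau_j]\!]$, whereas the integral $\int(\widehat M_s-M_{s-})\,d\Lambda_s$ needs $\widehat M-M_-$ defined $d\Lambda$-a.e.

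The paper resolves this by working with the \emph{compensator of the joint jump measure} of $(M,N)$: with $\mu=\mu^{M,N}$ on $\mathbb R^d_0\times\{0,1\}$ and $\nu$ its $\mathbb F$-compensator, one sets
\[
A_t:=\int_0^t\!\int_{\mathbb R^d_0} x\,\nu(dx\times\{1\},ds),\qquad \rho:=\frac{dA}{d\Lambda},\qquad \widehat M:=M_-+\rho.
\]
Here $\rho$ is automatically predictable (as a Radon--Nikodym derivative between predictable measures) and defined $d\Lambda$-a.e., and the paper then checks $\langle M^{d,1}-\int\rho\,d\widetilde N,\widetilde N\rangle=0$ directly. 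The interpretation (see the paper's Remark following the lemma) is $\rho_t=E[\Delta M_t\mid F_{t-},\Delta N_t=1]$, i.e.\ a \emph{ratio of intensities}, not a pointwise predictable projection. Once you replace your predictable-projection construction by this Radon--Nikodym one, the rest of your plan goes through.
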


\begin{proof} 
The proof is essentially a modification  of the results in 
\cite[Theorem III.4.20 and Lemma III.4.24]{Jacod-Shiryaev2003}, 
we give it here  to clarify the nature of  $\widehat M$.

Put $\mathbb R ^d_0 = \mathbb R^d\setminus \{0\}$. 
Let $\mu = \mu ^{M,N}$ be the jump measure of the process $(M,N)$. 
Note that $\mu$ is an integer valued random measure  on $E\times \mathbb R^+$, where   
$E = \mathbb R^d_0 \times \{0,1\}$ equipped with  the Borel $\sigma$-algebra $\mathbb E $.

Obviously $\mu^M (\cdot,dt ) = \mu (\cdot \times \{0,1\},dt)$ and 
$\mu ^N(dt)  = \mu (\mathbb R^d_0 \times \{ 1\},dt) $. Let 
$\nu = \nu ^{M,N}$ be the $(\mathbb F , \mathbb P)$ compensator of $\mu$. 

The uniqueness of the compensator implies that 
$\nu ^M (\cdot ,dt ) = \nu (\cdot \times \{ 0,1\} ,dt) $ and $\Lambda(dt) = \nu ^N(dt) = \nu (\mathbb R^d_0 \times \{1\},dt ) $. We introduce the $\mathbb R^d$- valued process 
\begin{eqnarray*}
A_t := \int _0^t \int_{\mathbb R ^d _0 } x \nu ( dx\times \{1\}, ds ) ,\end{eqnarray*}
which is predictable with finite variation. The corresponding $\mathbb R^d$-valued
predictable random measure $A(dt)$ satisfies $A \ll \Lambda  $ on ${\mathcal B}(\mathbb R^+)$,
with R-N derivative $\rho  := \frac{dA}{d\Lambda } $. Define 
\begin{eqnarray*}
\widetilde M _t =\int _0^t \rho _s d \widetilde N_s\; ,
\end{eqnarray*}
where $\widetilde N = N - \Lambda $.

The martingale $M$ has a decomposition
$$
M_t = M_0 + M^c_t + \int _0^t \int _{\mathbb R^d_0} x (\mu ^M - \nu ^M )(dx,ds) ;
$$
Put $U = M - \widetilde M$. Then $U$ has a decomposition 
$$
U_t = M_0 + M ^c _t + \int _0^t\int _ {\mathbb R ^d _0 } x (\mu ^M - \nu ^M )(dx,ds ) - \widetilde M _t .
$$
We will show that $\langle U , \widetilde N \rangle = 0 $. 

Obviously $[M^c , \widetilde N ] = 0 $ and so $\langle M^c , \widetilde N \rangle = 0 $.
Next, we have 
\begin{eqnarray*} 
 \int _0^t \int _{\mathbb R ^d _0 } x (\mu ^M - \nu ^M )(dx,ds ) & =&  \int _0^t \int _{\mathbb R^d _0 } x (\mu - \nu )
(dx\times\{0,1\} ,ds )  \\
= \int _0^t \int_{ \mathbb R ^d _0} x (\mu - \nu )(dx\times \{0\} ,ds )  &+ &
\int _0^t \int_{ \mathbb R ^d _0} x (\mu - \nu )(dx\times \{1\} ,ds ) \\ &  =: & M^{d,0}_t + M^{d,1}_t .
\end{eqnarray*}
By construction, $[M^{d,0}, \widetilde N ] = 0 $, and hence $\langle M^{d,0}, \widetilde N \rangle=0$. 
Finally,
\begin{eqnarray*}
\langle 
M^{d,1},\widetilde N\rangle _t  = \int _0^t \int _{\mathbb R^d_0} x  \nu (dx\times\{ 1\}, ds) = \int _0^t \rho_s ds ,
\end{eqnarray*}
and this gives $\langle M^{d,1} - \widetilde M, \widetilde N \rangle = 0 $.  Finally, put $\widehat M = M_- + \rho$, and we have the claim.

\end{proof}

\begin{remark}\label{note1}
Note that we have the interpretations
\begin{eqnarray*} 
 d\Lambda _t = \mathbb P(\Delta N_t= 1 | F_{t-}) , \quad dA_t = 
E_{\mathbb P} ( \Delta M_t \Delta N_t | F_{t-}) \; .
\end{eqnarray*}
 Hence $\rho _t$ has the interpretation 
$$
\rho _t = \frac{E _{\mathbb P}( \Delta M_t {\mathbf{1}}( \Delta N_t=1) | F_{t-})}
{\mathbb P(\Delta N_t= 1 | F_{t-})} =: E_{\mathbb P}(\Delta M_t | F_{t-}, \Delta N_t = 1 ) ,
$$
which is well-defined, since $\rho$ is a Radon-Nikodym derivative.  
Similarly, $\widehat M $ has the interpretation 
$$
\widehat M_t = E_{\mathbb P}(  M_t \big\vert F_{t-}, \Delta N_t = 1 ) .
$$
\end{remark} 
We recall the definitions of the stopped 
$\sigma$-algebras (\cite[I.1.1b]{Jacod-Shiryaev2003}) associated to
a  stopping time $\tau$:
\begin{eqnarray*} &&
{ F}_{\tau} := \bigl \{ A\in { F}:
  A  \cap \{ \tau \le t \} \in {F}_t \quad \forall t \ge 0 \bigr
\} 
\\ &&
{ F}_{\tau-} := \sigma\bigl(  A  \cap \{ t < \tau  \} : t\ge 0, A\in {F}_t
\bigr)
\end{eqnarray*}
It follows that ${ F}_{\tau-} \subseteq { F}_{\tau}$, and 
by taking $A=\Omega$ in the definition,  $\tau$ itself is $F_{\tau-}$-measurable.
In simple words, ${ F}_{\tau-}$ contains the information
about $\tau$  and everything that happened before it,
while ${ F}_{\tau}$  contains also the information
which comes  with $\tau$.

Since the simple left-continuous  adapted processes 
\begin{eqnarray*}
 K_t(\omega) = {\bf 1}_A( \omega) {\bf 1}(  u< t ) ,\quad  u\ge 0, \; A\in F_u
\end{eqnarray*}
generate the predictable $\sigma$-algebra,  it follows that
\begin{eqnarray*}
{ F}_{\tau-}=
 \sigma\bigl ( K_{\tau}{\bf 1}(\tau < \infty): K \mbox{ is $\mathbb F$-predictable } \bigr)
\end{eqnarray*}

\begin{lemma} \label{equivalence:lemma}
  Let $\tau$ be a stopping time in a right-continuous filtration $\mathbb F=(F_t)$
completed by the $\mathbb P$-null sets. Denote $n_t={\bf 1}(\tau\le t)$ and
$\tilde n= (n-\varrho)$ with $\varrho= n^p$. Then for all $\mathbb F$-martingales $u$ 
\begin{eqnarray*} 
\langle  \tilde n,u \rangle=0 \Longrightarrow [ \tilde n,u ]=[n,u]=0 \; ,
\end{eqnarray*} 
if and only if ${ F}_{\tau}= { F}_{\tau-}$.
\end{lemma}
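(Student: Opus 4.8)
The plan is to prove the two implications separately, using the identity ${F}_{\tau-} = \sigma(K_\tau \mathbf 1(\tau<\infty) : K \text{ is } \mathbb F\text{-predictable})$ recalled just above, together with the fact that $\widetilde n = n - \varrho$ is itself a counting-process martingale to which Lemma~\ref{MRT} applies. I would first note the elementary fact that for any $\mathbb F$-martingale $u$, $[n,u] - [\widetilde n, u] = [\varrho, u]$ is a martingale (since $\varrho$ is continuous, hence predictable and of finite variation, so $[\varrho,u] = \varrho\cdot\Delta u$ summed is actually $0$ at jumps of $u$... more carefully, $[\varrho,u]=0$ because $\varrho$ is continuous), so in fact $[n,u]=[\widetilde n,u]$ always, and thus $[\widetilde n,u]=0 \iff [n,u]=0$; this disposes of one of the three equalities in the displayed chain and reduces the statement to: ``$\langle\widetilde n,u\rangle=0 \Rightarrow [\widetilde n,u]=0$ for all $\mathbb F$-martingales $u$'' $\iff$ ${F}_\tau = {F}_{\tau-}$.

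For the direction ${F}_\tau = {F}_{\tau-} \Rightarrow$ (the predictable-bracket vanishing forces the optional bracket to vanish), I would apply Lemma~\ref{MRT} to the martingale $u$ with $N$ there taken to be $n$: we get $u_t = u_0 + \int_0^t(\widehat u_s - u_{s-})\,d\widetilde n_s + U_t$ with $\langle\widetilde n, U\rangle = 0$. Then $\langle\widetilde n,u\rangle_t = \int_0^t(\widehat u_s - u_{s-})\,d\langle\widetilde n\rangle_s = \int_0^t(\widehat u_s - u_{s-})\,d\varrho_s$ (using continuity of $\varrho$, so $\langle\widetilde n\rangle=\varrho$), while $[\widetilde n,u]_t = \sum_{s\le t}\Delta\widetilde n_s\Delta u_s = \sum_{s\le t}\Delta n_s\Delta u_s = (\Delta u_\tau)\mathbf 1(\tau\le t)$, and by the representation $\Delta u_\tau = \widehat u_\tau - u_{\tau-}$ on $\{\tau<\infty\}$ (the jump of $u$ at $\tau$ is carried entirely by the $d\widetilde n$ integral since $U$ has no common jumps with $\widetilde n$). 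Now under ${F}_\tau={F}_{\tau-}$ the jump $\Delta u_\tau = u_\tau - u_{\tau-}$ is ${F}_{\tau-}$-measurable, hence (by the predictable-process characterization of ${F}_{\tau-}$) equals $\widehat u_\tau$ on $\{\tau<\infty\}$ anyway; so if $\langle\widetilde n,u\rangle\equiv 0$ then $\widehat u = u_-$ on the support of $d\varrho$, i.e. $\rho_s := \widehat u_s - u_{s-}$ vanishes $d\varrho$-a.e.; taking predictable projections, $E[\Delta u_\tau \mathbf 1(\tau<\infty)\mid F_{\tau-}] = \widehat u_\tau - u_{\tau-} = 0$, and since $\Delta u_\tau$ is ${F}_{\tau-}$-measurable it equals its own conditional expectation, so $\Delta u_\tau = 0$, giving $[\widetilde n,u]=0$.

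For the converse ${F}_\tau \neq {F}_{\tau-} \Rightarrow$ there is a counterexample martingale, I would pick $A \in {F}_\tau \setminus {F}_{\tau-}$ with $0<\mathbb P(A)<1$ and set $u_t = \mathbb P(A\cap\{\tau\le t\}\mid F_t) + \mathbb P(A\cap\{\tau>t\}\mid F_t)$ --- i.e. $u_t = E[\mathbf 1_A\mid F_t]$, the closed martingale generated by $\mathbf 1_A$, which is a genuine $\mathbb F$-martingale. Since $A\in{F}_\tau$, $u$ jumps at $\tau$ to the value $\mathbf 1_A$ on $\{\tau<\infty\}$... more precisely $u_\tau = E[\mathbf 1_A\mid F_\tau]$ incorporates the information revealed at $\tau$, whereas $u_{\tau-}$ is ${F}_{\tau-}$-measurable; because $A\notin{F}_{\tau-}$, the jump $\Delta u_\tau = u_\tau - u_{\tau-}$ is not a.s. zero, so $[\widetilde n,u]_\infty = \Delta u_\tau \mathbf 1(\tau<\infty) \neq 0$. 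On the other hand the predictable bracket $\langle\widetilde n,u\rangle$ is the compensator of $[\widetilde n,u]$, which need not be what we want directly --- so instead I would replace $u$ by $u' = u - \int_0^{\cdot}\widehat u'_s\,d\widetilde n_s$ after extracting $\widehat u$ via Lemma~\ref{MRT}, i.e. take $u' = U$ from the representation of $u$; then $\langle\widetilde n, u'\rangle = 0$ by construction while $[\widetilde n,u']_\tau = \Delta U_\tau$, and one checks $\Delta U_\tau = \Delta u_\tau - \rho_\tau = \Delta u_\tau - (\widehat u_\tau - u_{\tau-})$, which is nonzero on a set of positive probability precisely because $\Delta u_\tau$ fails to be ${F}_{\tau-}$-measurable (it is not a function of the predictably-available data, so cannot coincide a.s.\ with the predictable $\widehat u_\tau - u_{\tau-}$). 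This furnishes an $\mathbb F$-martingale with $\langle\widetilde n,u'\rangle=0$ but $[\widetilde n,u']\neq 0$, contradicting the left-hand condition; hence the left condition forces ${F}_\tau={F}_{\tau-}$.

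The main obstacle I anticipate is the converse direction: one must be careful that the candidate martingale $U = u - \int\rho\,d\widetilde n$ genuinely has a non-vanishing jump at $\tau$, and this requires arguing that $\Delta u_\tau$ --- which is $E[\mathbf 1_A\mid F_\tau] - E[\mathbf 1_A\mid F_{\tau-}]$ in spirit --- cannot be matched by any $\mathbb F$-predictable process evaluated at $\tau$, precisely because ${F}_{\tau-}$ is generated by such evaluations and $A\notin{F}_{\tau-}$. Making this ``non-matchability'' rigorous is the crux: it amounts to showing that if $\Delta u_\tau \mathbf 1(\tau<\infty)$ equalled $K_\tau\mathbf 1(\tau<\infty)$ for some predictable $K$, then $\mathbf 1_A$ would be ${F}_{\tau-}$-measurable, a contradiction. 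The forward direction, by contrast, is essentially a direct computation once Lemma~\ref{MRT} and the predictable-projection interpretation in Remark~\ref{note1} are in hand.
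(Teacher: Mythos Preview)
Your overall strategy—route everything through Lemma~\ref{MRT}—differs from the paper's, which argues directly. There is a genuine gap in your forward direction, and the converse, while sound in spirit, is less direct than the paper's explicit construction.

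\textbf{Forward direction: the circular step.} After applying Lemma~\ref{MRT} to obtain $u=u_0+\int\rho\,d\tilde n+U$ with $\langle\tilde n,U\rangle=0$, you assert that $\Delta u_\tau=\widehat u_\tau-u_{\tau-}$ because ``$U$ has no common jumps with $\tilde n$''. But Lemma~\ref{MRT} only yields $\langle\tilde n,U\rangle=0$, not $[\tilde n,U]=0$; inferring the latter from the former is precisely the implication you are trying to prove, now for $U$ instead of $u$. Your subsequent ``taking predictable projections'' patch can be made rigorous, but the step you gloss over—showing $E[\Delta U_\tau\mid F_{\tau-}]=0$, or passing from $\rho=0$ $d\varrho$-a.e.\ to $\rho_\tau=0$ a.s.—requires the same core move as the paper: represent the $F_{\tau-}$-measurable jump as $k_\tau$ for a predictable $k$, observe that $(k\cdot\varrho)$ is a predictable finite-variation local martingale hence identically zero, and deduce $(k\cdot n)=0$ via $E[\int|k|\,dn]=E[\int|k|\,d\varrho]=0$. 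The MRT detour does not avoid this argument; it only postpones it. Note also that your use of Lemma~\ref{MRT} forces $\varrho$ to be continuous, an assumption not present in the statement of Lemma~\ref{equivalence:lemma}.

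\textbf{The paper's forward direction} skips MRT entirely. Since $F_\tau=F_{\tau-}$, the $F_\tau$-measurable jump $\Delta u_\tau$ equals $k_\tau$ for some predictable $k$, giving $[u,n]=(k\cdot n)$. Writing $[u,\tilde n]=(k\cdot\varrho)+(k\cdot\tilde n)-[u,\varrho]$ and using that $[u,\tilde n]$, $(k\cdot\tilde n)$, and $[u,\varrho]=\int\Delta\varrho\,du$ are all local martingales, one concludes that the predictable finite-variation process $(k\cdot\varrho)$ is a local martingale, hence zero, whence $(k\cdot n)=0$. This handles general (possibly discontinuous) $\varrho$.

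\textbf{Converse direction.} Your plan—take $v_t=E[\mathbf{1}_A\mid F_t]$ for $A\in F_\tau\setminus F_{\tau-}$, extract the orthogonal piece $U$ from Lemma~\ref{MRT}, and argue that $\Delta U_\tau=\Delta v_\tau-\rho_\tau$ cannot vanish since $\Delta v_\tau$ is not $F_{\tau-}$-measurable while $\rho_\tau$ is—works (under continuous $\varrho$). The paper's construction is more explicit and avoids MRT: it sets $u_t:=X\,n_t$ with $X=\mathbf{1}_A-\mathbb P(A\mid F_{\tau-})$, verifies directly that $u$ is an $\mathbb F$-martingale (using that $(n_t-n_s)\mathbf{1}_B$ is $F_{\tau-}$-measurable for $B\in F_s$, together with $E[X\mid F_{\tau-}]=0$), and observes that $u=[u,n]\not\equiv 0$ while $[u,\tilde n]=u-[u,\varrho]$ is a local martingale, forcing $\langle u,\tilde n\rangle=0$.
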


\begin{proof} 
Let $u_t$ be a $\mathbb F$-martingale
with $\langle u, \tilde n\rangle=0$.

The r.v. $\Delta [n,u]_{\tau}= \Delta u_{\tau}$ is 
${ F}_{\tau}$-measurable. By the assumption ${F}_{\tau}={ F}_{\tau-}$,
there  is a predictable process $k_t$
such that ${\mathbf 1}(\tau<\infty) \Delta u_{\tau} ={\mathbf 1}(\tau<\infty) k_{\tau}$, which means
\begin{eqnarray*}
[u,n]_t =\int_0^t k_s dn_s =( k\cdot n)_t \ .
\end{eqnarray*}
In the notation of Lemma \ref{MRT} $k_t =( \widehat u_t -u_{t-})$. Note that
\begin{eqnarray*}
  [u,\tilde n] = [u, n] -[u,\varrho]
=  (k\cdot n) - [u,\varrho] = (k\cdot \varrho) +
(k\cdot \tilde n)  - [u,\varrho]
\end{eqnarray*}
is  a local martingale since by assumption $\langle u, \tilde n \rangle=0$.
Since
\begin{eqnarray} \label{eq:loc.martingale}
[u,\varrho]_t = \int_0^t \Delta \varrho_s du_s
\end{eqnarray}
is also  a  local martingale, the predictable process $(k\cdot \varrho)$ is 
 a local martingale with finite variation, therefore
\begin{eqnarray*}
 (k\cdot \varrho )=0 \quad \mbox{  and  hence also } \quad ( k\cdot n)=[u,n]=
[u,\tilde n]=0.
\end{eqnarray*}

Next we show that if
\begin{eqnarray*}
 [  u ,  n  ]= 0  \mbox{ for all $\mathbb F$-martingales $u$
with  $\langle  u , \tilde n \rangle =0  $ },
\end{eqnarray*}
then necessarily ${ F}_{\tau-} = { F}_{\tau}$.
If this is not the case, there is   $A\in ( { F}_{\tau} \setminus { F}_{\tau-} )$ with $\mathbb P(A)>0$, 
and we find a bounded and ${ F}_{\tau}$-measurable random variable
\begin{eqnarray*}
X(\omega) : =
 {\bf 1}_A(\omega)-\mathbb P( A | {\mathcal F}_{\tau-} )(\omega)  \not\equiv 0 
\end{eqnarray*}
with  $E_{\mathbb P} ( X | {\mathcal F}_{\tau-} )(\omega)=0$. We show first that
$u_t(\omega):=X(\omega)n_t(\omega)$ is a $\mathbb F$-martingale:
\begin{itemize} 
 \item 
$u_t$ is $\mathbb F$-adapted since $X$ is ${ F}_{\tau}$-measurable.
\item
For $s \le  t$ and $A\in F_s$, $(n_t-n_s){\bf 1}_A=n_t ( 1-n_s){\bf 1}_A$
is  ${ F}_{\tau-}$-measurable,
 since $(1-n_s) {\bf 1}_A$ is ${ F}_{\tau-}$-measurable by definition
and $\tau$ is ${ F}_{\tau-}$-measurable.
The martingale property follows:
\begin{eqnarray*}
E_{\mathbb P}( (u_t -u_s)  {\bf 1}_A  ) = 
 E_{\mathbb P}\bigl( X (n_t -n_s) {\bf 1}_A \bigr) = 
E_{\mathbb P}\bigl(  E_{\mathbb P}( X  \vert {\mathcal F}_{\tau-} )  (n_t-n_s) {\bf 1}_A \bigr ) =0
\end{eqnarray*}
\end{itemize}
Note also that 
\begin{eqnarray*}
0 \not\equiv u_t = [u,n]_t = [u,\tilde n]_t + [u,\varrho ]_t
\end{eqnarray*}
where  $[u,\varrho]$ is a local martingale by (\ref{eq:loc.martingale}).
We see that  $[u,\tilde n]$ is a local martingale 
which implies $\langle u,\tilde n \rangle= 0$.
\end{proof}

\textbf{Assumption 1 }  The   jump times $\tau_k$ of $N$
 satisfy ${F}_{\tau_k}={F}_{\tau_k-}$, with continuous $\mathbb F$-compensator.

\begin{corollary} \label{c:mrt} Under assumption 1,
let $\eta(\omega)$ be  a ${\mathcal F}$-measurable  $\mathbb R ^d$- valued random variable. 
If $f$ is a bounded measurable function, then the optional projection ${}^o f(\eta ) $ of $f(\eta )$
is a $\mathbb F$- martingale. Hence ${}^of(\eta ) $ has a representation  
\begin{eqnarray*} 
i) &&   {}^o f(\eta)_t = E_{\mathbb P} (f(\eta))+ \int_0^t ( \widehat{  {}^o f(\eta)} _s -{ }^o f(\eta)_{s-}) d\widetilde N_s + U_t(f) ,\\ && 
\mbox{   $(U_t(f))$ is a $\mathbb F$-martingale with } 
[ \widetilde N,
U(f )] =[ N,U(f) ]= 0,
\\  
ii) 
&&   {}^p f(\eta)_{s} = {} ^o f(\eta)_{s-}
\end{eqnarray*} 

\end{corollary}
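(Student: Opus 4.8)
The plan is to verify the three assertions of Corollary~\ref{c:mrt} in turn, building on Lemma~\ref{MRT} and Lemma~\ref{equivalence:lemma}. First I would argue that $t\mapsto {}^o f(\eta)_t$ is an $\mathbb F$-martingale. Since $\eta$ is $\mathcal F$-measurable and $f$ is bounded, $f(\eta)$ is a single bounded random variable, so the optional projection ${}^o f(\eta)$ is (a c\`adl\`ag version of) the martingale $t\mapsto E_{\mathbb P}(f(\eta)\mid F_t)$; this is the defining property of the optional projection of a constant-in-time integrable random variable. In particular ${}^o f(\eta)_0 = E_{\mathbb P}(f(\eta))$ and the process is uniformly integrable.

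Next, for part $i)$, I would apply Lemma~\ref{MRT} to the $\mathbb R$-valued (hence trivially $\mathbb R^d$-valued after the obvious identification, or one applies it coordinatewise) $\mathbb F$-martingale $M = {}^o f(\eta)$. The hypothesis of Lemma~\ref{MRT} is met because, under Assumption~1, each jump time $\tau_k$ of $N$ has continuous $\mathbb F$-compensator, so $\Lambda = N^p$ is continuous. Lemma~\ref{MRT} then yields the representation with $\widehat M = \widehat{{}^o f(\eta)}$ predictable and a remainder $U(f) := U$ that is an $\mathbb F$-local martingale with $\langle \widetilde N, U(f)\rangle = 0$. To promote $\langle \widetilde N, U(f)\rangle = 0$ to $[\widetilde N, U(f)] = [N, U(f)] = 0$, I would invoke Lemma~\ref{equivalence:lemma}: it says precisely that for a stopping time $\tau$ with $F_\tau = F_{\tau-}$, orthogonality in the predictable-bracket sense upgrades to orthogonality in the square-bracket sense for all $\mathbb F$-martingales. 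Here $\widetilde N = \sum_k \tilde n^{(k)}$ decomposes into the compensated indicators $\tilde n^{(k)}$ of the jump times $\tau_k$ (with disjoint jump supports), Assumption~1 gives $F_{\tau_k} = F_{\tau_k-}$, and $U(f)$ is an $\mathbb F$-local martingale (localised to a martingale, since ${}^o f(\eta)$ and $\widetilde M = \int \rho\, d\widetilde N$ are true martingales for bounded $f$), so $\langle \tilde n^{(k)}, U(f)\rangle = 0$ forces $[\tilde n^{(k)}, U(f)] = 0$ for each $k$; summing over $k$ gives $[\widetilde N, U(f)] = 0$, and since $[U(f), \Lambda] = 0$ as $\Lambda$ is continuous and of finite variation, also $[N, U(f)] = [\widetilde N, U(f)] = 0$.

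For part $ii)$, I would use the general identity relating the predictable projection of a c\`adl\`ag martingale to its left limit: for any $\mathbb F$-martingale $M$ one has ${}^p M_s = M_{s-}$ (see \cite[I.2.28]{Jacod-Shiryaev2003}), applied to $M = {}^o f(\eta)$, which gives ${}^p f(\eta)_s = {}^p({}^o f(\eta))_s = {}^o f(\eta)_{s-}$, using that the predictable projection of $f(\eta)$ equals the predictable projection of its optional projection (predictable projections factor through optional projections). This is immediate and needs no extra hypotheses.

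The main obstacle is the bookkeeping in part $i)$: carefully matching the single-counting-process statement of Lemma~\ref{equivalence:lemma} to the compensated process $\widetilde N = N^{ek} - \Lambda^{ek}$, which is attached to the $e\to k$ transitions and whose jumps are exactly at a subsequence of the $\tau_k$, and checking that the local martingale $U(f)$ produced by Lemma~\ref{MRT} is genuinely a (uniformly integrable) martingale when $f$ is bounded so that the equivalence lemma — stated for martingales — applies without a further localisation argument. Everything else is a direct citation of the two preceding lemmas and of standard projection identities.
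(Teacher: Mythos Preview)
Your proposal is correct and follows exactly the route the paper intends: the corollary is stated without proof precisely because it combines Lemma~\ref{MRT} (giving the representation with $\langle\widetilde N,U\rangle=0$) with Lemma~\ref{equivalence:lemma} under Assumption~1 (upgrading to $[\widetilde N,U]=[N,U]=0$), together with the standard identity ${}^pM_s=M_{s-}$ for a martingale. The only point to make explicit in your decomposition step is that $\tilde n^{(k)}=\int {\bf 1}(N_{s-}=k-1)\,d\widetilde N_s$, so $\langle\widetilde N,U\rangle=0$ indeed forces $\langle\tilde n^{(k)},U\rangle=0$ for each $k$; alternatively, the proof of Lemma~\ref{equivalence:lemma} carries over verbatim with $N$ in place of $n$ once all jump times satisfy $F_{\tau_k}=F_{\tau_k-}$.
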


\subsection{Compensator after initial enlargement}

To compute the compensator of $N$ in the initially enlarged filtration $\mathbb G$, where 
$G_t = \cap _{u> t} (F_t \vee \sigma(\vartheta))$, we used the approach initiated in  \cite{Gasbarra2006}, 
and developed further in \cite{Ankirchner2005}:

Consider the measurable product space $(\Omega \times \mathbb R^m,\mathcal{F}\otimes\mathcal{B}(\mathbb R^m))$ denoted by $(\bar{\Omega},\bar{\mathcal{F}})$. Define the map 
\begin{equation*}
\begin{array}{ccccc}
\Phi:&(\Omega,\mathcal{F})& \longrightarrow& (\bar{\Omega}, \bar{\mathcal{F}})\\ &\omega& \longmapsto& (\omega, \vartheta(\omega))\\ 
\end{array}
\end{equation*}

We denote by $\bar{\mathbb{P}}$ the image of the measure $\mathbb P$ under $\Phi$, i.e. $\bar{\mathbb{P}} = \mathbb{P}_\Phi $.
Endow the space $\bar{\Omega}$ with the $\bar{\mathbb P}$-completed filtration $\bar{\mathbb{F}}=(\bar{F}_t)_{t\geq 0}$ where 
\begin{equation*}
 \bar{F}_t=\bigcap\limits_{u>t}( F_u\otimes \mathcal{B}(\mathbb R^m)) \vee {\overline N} ,\quad
 {\overline N}= \bigl \{ \bar A \subseteq \bar{\Omega}: \bar{\mathbb{P}} (\bar A )=0 \bigr\} \; .
\end{equation*}
We will  consider the initially enlarged filtration $\mathbb{G}=(G_t)_{t \in [0,T]}$ with 
$G_t =\bigcap\limits_{u>t}(F_u \vee \sigma( \vartheta))$,
where $\vartheta \in {\mathcal L}^0(\Omega,{\mathcal F},\mathbb{P},\mathbb{F})$ is a $m$-dimensional random variable.

Consider also  the filtered spaces 
$$
\left(\bar{\Omega}\times \mathbb R_+ ,\, \bar{\mathbb{F}}\otimes \mathcal{B}\left(\mathbb R_+\right),\,\bar{\mathbb{P}}\right)
$$
and 
$$
\left(\Omega\times \mathbb R_+ ,\, \mathbb{G}\otimes \mathcal{B}\left(\mathbb R_+\right),\, \mathbb{P} \right) .
$$ 

Recall the following facts from \cite{Ankirchner2005}; let  $\bar X$ be a stochastic process defined on 
$(\bar \Omega\times \mathbb R_+ , \bar{\mathbb{F}} \otimes \mathcal B (\mathbb R ^+ ) )$:
\begin{itemize}
\item If $\bar X $ is $\bar{\mathbb F}$- predictable
(resp. $\bar{\mathbb F}$- optional), then $X = \bar X \circ \Phi $ is $\mathbb{G} $- predictable
(resp. ${\mathbb G}$- optional).
\item If $\bar \tau $ is a $\bar{\mathbb F}$- stopping time, then $\tau = \bar\tau\circ \Phi$ is a 
$\mathbb{G} $- stopping time.
\item If $\bar M$ is $(\bar{\mathbb F}, \bar{\mathbb{P}} )$- local martingale, then $M = \bar M \circ \Phi $ is a 
$(\mathbb{G},\mathbb{P} )$- local martingale. 
\end{itemize}

For example, let $\bar X$ be a simple $\bar{\mathbb F}$-predictable process: 
\begin{eqnarray*}
\bar{X}(\omega,\ell,u)=\textbf{1}_A(\omega)\textbf{1}_B(\ell) {\bf 1}_{(s,t]}(u) ,
\end{eqnarray*}
where $s < t \le T$, $A\in F_s $ and $B \in \mathcal B (\mathbb R^m) $.  Then 
$X = \bar X \circ \Phi $ is 

\begin{equation}
 \label{processinomega}
X = \bar X \circ \Phi = \textbf{1}_A \textbf{1}_B(\vartheta ) 1_{(s,t]}
\end{equation}

$\mathbb G $-predictable process.

Extend $N$ to $\bar\Omega \times \mathbb R_+ $ by $\bar N$, where $\bar{N}(\omega,\ell,u)=N(\omega,u)$. 

Let $\bar \pi $ be the measure generated by $\bar N $ on $(\bar F \otimes \mathcal B(\mathbb R_+))$:

\begin{eqnarray*} 
\bar\pi(\bar{Y}):=E_{\bar{\mathbb{P}}}\left(\int_0^\infty \bar{Y}_u d\bar{N}_u  \right)
=E_{\mathbb P}( \mathbf{1}_C \mathbf{1}_B(\vartheta) (N_t -N_s)), 
\end{eqnarray*} 
where 
$$
\bar{Y}(\omega , \ell, u) = \mathbf{1}_C(\omega ) \mathbf{1}_B (\ell) \mathbf{1}_{(s,t]}(u) 
$$

Since $\bar N$ is optional 
with respect to the history $\bar{\mathbb{F}}$, the measure $\bar \pi $ is also optional:
for any bounded non-negative $\mathcal{F}\otimes \mathcal B (\mathbb R ^m) \otimes \mathcal B (\mathbb R_+ )$ measurable 
process $\bar Y $ we have
$$
\bar \pi (\bar Y ) = \bar \pi ( {}^{\bar {\mathbb F}, o }\bar Y ) 
$$
(see \cite[p. 141]{Chinese} for more details). 

Denote by $\pi $ the measure generated by $N$
on $\mathbb{F}\otimes \mathcal{B}\left(\mathbb R_+\right)$. Then for optional $\bar Y$, put 
 $Y = \bar Y \circ \Phi$, and we have $\pi (Y) = \pi ({}^o Y) $ (see \cite{Ankirchner2005}). 
Apply this to $X$ of the form 
(\ref{processinomega}) with $A \in F_s $, and we get 
\begin{eqnarray*}
\bar\pi( \bar X) = \pi (X) = \pi ({}^{o}X)
=E_{\mathbb P}\bigl( \mathbf{1}_A \bigl\{ 
 {}^o (\mathbf{1}_B( \vartheta))_t N_t - {}^o (\mathbf{1}_B(\vartheta))_s N_s\bigr \} \bigr ).
\end{eqnarray*}
We can now continue using Corollary \ref{c:mrt}, the continuity of $\Lambda $, lemma \ref{equivalence:lemma} under
 assumption 1 
 and integration by parts to obtain

\begin{eqnarray*} 
&&E_{\mathbb P}\bigl( \mathbf{1}_A \bigl\{ 
 {}^o (\mathbf{1}_B( \vartheta))_t N_t - {}^o (\mathbf{1}_B(\vartheta))_s N_s\bigr \} \bigr )
= \\ &&
E_{\mathbb P}\biggl( \mathbf{1}_A \int_s^t {}^ o(\mathbf{1}_B( \vartheta))_{u-}  dN_u\biggr) + 
E_{\mathbb P}\biggl( \mathbf{1}_A \int_s^t  N_{u-} d {}^o(\mathbf{1}_B( \vartheta))_u  \biggr)+ \\ && 
E_{\mathbb P} \bigl(\mathbf{1} _A \bigl \{ [N,{}^o
(\mathbf{1}_B( \vartheta))]_t - [N,{}^ o(1_B( \vartheta))]_s \bigr\} \bigr)= 
\\ && E _{\mathbb P}\biggl( \mathbf{1}_A \int_s^t \widehat{ {}^o \mathbf{1}_B(\vartheta)}_u dN_u \biggr)=
 E _{\mathbb P} \biggl( \mathbf{1}_A \int_s^t \widehat{ ( {}^o \mathbf{1}_B(\vartheta))}_u d\Lambda_u \biggr).
\end{eqnarray*} 
On the other hand, consider the counting process $\mathbf 1 _B(\vartheta)N$, which is 
adapted to $\mathbb G^{\vartheta}$,
and we know that it has a dual predictable projection with respect to $\mathbb F$:
\begin{eqnarray*} 
E_{\mathbb P}\biggl( \mathbf{1}_A  \int_s^t d(\mathbf{1}_B( \vartheta) N )_u \biggr)=
E_{\mathbb P}\biggl( \mathbf{1}_A  \int_s^t d(\mathbf{1}_B( \vartheta) N )_u^p \biggr).
\end{eqnarray*}
This means by the uniqueness of the dual predictable projection that
\begin{equation}\label{RDD}
(\mathbf{1}_B( \vartheta) N )_t^p = \int_0^t  \widehat{ ( {}^o \mathbf{1}_B(\vartheta))}_u d\Lambda_u .
\end{equation} 
We use the notation $\bar \theta $ for the measure
$$
\bar\theta (\bar X) =\bar\theta ( C\times B \times (s,t]) = 
E_{\mathbb P}\left( \mathbf 1 _C\int_s^t  \widehat{ ( {}^o \mathbf{1}_B(\vartheta))}_u d\Lambda_u\right) .
$$
extended to the $\sigma$-algebra
${\mathcal F}\otimes{\mathcal B}(\mathbb R^m) \otimes{\mathcal B}(\mathbb R^+)$.

Note that $\bar \pi $ coincides with $\bar\theta $ on the predictable $\sigma$-algebra 
$ {\mathcal P}(\bar{\mathbb{F}})$ 

Next, define a measure $\widetilde{\theta}(d\omega,d\ell,dt)$ by

\begin{equation}\label{e:two}
\widetilde{\theta}(\bar{X})=\widetilde{\theta}( C\times B \times (s,t]):=
E_{\mathbb P}\biggl(1_C \int_s^t {}^p (\mathbf{1}_B(\vartheta))_{u}d\Lambda_u \biggr)
\end{equation}
extended to the $\sigma$-algebra
${\mathcal F}\otimes{\mathcal B}(\mathbb R^m) \otimes{\mathcal B}(\mathbb R^+)$.

Next we compare the measures $\bar \theta$ and $\widetilde \theta$
in the smaller $\sigma$-algebra ${\mathcal P}(\bar{\mathbb{F}})$,
and use Radon-Nikodym theorem to obtain a 
$\bar{\mathbb{F}}$-predictable density process.

\begin{theorem}\label{compensator}
Assume that $\bar\theta \ll \widetilde {\theta}$ on the
predictable $\sigma$-algebra ${\mathcal P}(\bar{\mathbb{F}})$ generated by the sets $A\times B \times (s,t]$ with $A\in F_s$,
and denote the 
Radon-Nikodym derivative by
\begin {eqnarray*} 
 \bar U( \omega,\ell, t)=
\frac{  d \bar \theta  }
{  d \widetilde \theta  } (\omega,\ell,t) \bigg\vert_{{\mathcal P}(\bar{\mathbb{F}})}
\end{eqnarray*}
which is $\bar{\mathbb{F}}$-predictable.
 Put $Z(\vartheta, t) = (\bar U \circ \Phi)_t$,
 and then $Z(\omega, \vartheta(\omega) , \cdot )$ is 
  $\mathbb{G}$\,-\,predictable.
Then
 we have that 
\begin {eqnarray*} 
 E_{\mathbb P}( \mathbf{1}_A \mathbf{1}_B(\vartheta) (N_t -N_s))=E_{\mathbb P}\biggl(
\mathbf{1}_A \mathbf{1}_B(\vartheta)\int_s^t  Z(\vartheta,u) d\Lambda_u\biggr)
\end{eqnarray*}
and hence  
\begin{eqnarray*} 
 N_t - \int_0^t  Z(\vartheta,u) d\Lambda_u 
\end{eqnarray*}
is a martingale in the $\mathbb{G}$-filtration.

\end{theorem}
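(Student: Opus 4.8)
The plan is to chain together the measure-theoretic identities already assembled before the statement, reducing the claimed $\mathbb{G}$-martingale property to the uniqueness of dual predictable projections. First I would fix $A \in F_s$ and $B \in \mathcal{B}(\mathbb{R}^m)$, and recall from the preceding computation that
\begin{eqnarray*}
E_{\mathbb P}\bigl( \mathbf{1}_A \mathbf{1}_B(\vartheta)(N_t - N_s) \bigr) = \bar\pi(\bar X) = \bar\theta(\bar X),
\end{eqnarray*}
where $\bar X = \mathbf{1}_A \mathbf{1}_B \mathbf{1}_{(s,t]}$, because $\bar\pi$ and $\bar\theta$ agree on the predictable $\sigma$-algebra $\mathcal{P}(\bar{\mathbb{F}})$. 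Then, invoking the Radon-Nikodym hypothesis $\bar\theta \ll \widetilde\theta$ on $\mathcal{P}(\bar{\mathbb{F}})$ with derivative $\bar U$, I would write $\bar\theta(\bar X) = \widetilde\theta(\bar U \bar X)$ and unwind the definition (\ref{e:two}) of $\widetilde\theta$: since $\bar U \bar X$ is of the form $\bar U(\omega,\ell,u)\mathbf{1}_A(\omega)\mathbf{1}_B(\ell)\mathbf{1}_{(s,t]}(u)$ — still $\bar{\mathbb{F}}$-predictable — we obtain
\begin{eqnarray*}
\widetilde\theta(\bar U \bar X) = E_{\mathbb P}\biggl( \mathbf{1}_A \int_s^t {}^p\bigl( \mathbf{1}_B(\vartheta)\, \bar U(\cdot,\vartheta,\cdot)\bigr)_u \, d\Lambda_u \biggr).
\end{eqnarray*}
Applying $\Phi$ and using that $Z(\vartheta,u) = (\bar U\circ\Phi)_u$ together with the fact that $\mathbf{1}_B(\vartheta)Z(\vartheta,u)$ is already $\mathbb{G}$-predictable hence equal to its own predictable projection under the appropriate reduction, I would arrive at $E_{\mathbb P}\bigl( \mathbf{1}_A \mathbf{1}_B(\vartheta)\int_s^t Z(\vartheta,u)\,d\Lambda_u \bigr)$, which is the first displayed identity in the theorem.

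The second, slightly delicate point is passing from this identity — valid for generating sets $A\times B\times(s,t]$ with $A\in F_s$ — to the full $\mathbb{G}$-martingale statement. Here I would argue that the collection of sets $A \cap \{\vartheta \in B\}$ with $A\in F_s$, $B\in\mathcal{B}(\mathbb{R}^m)$ forms a $\pi$-system generating $G_s$ (up to null sets, using right-continuity and the definition $G_s = \cap_{u>s}(F_u \vee \sigma(\vartheta))$), so a monotone-class / functional form argument upgrades the identity to
\begin{eqnarray*}
E_{\mathbb P}\bigl( V\,(N_t - N_s)\bigr) = E_{\mathbb P}\biggl( V \int_s^t Z(\vartheta,u)\, d\Lambda_u\biggr)
\end{eqnarray*}
for every bounded $G_s$-measurable $V$. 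Since $\int_0^t Z(\vartheta,u)\,d\Lambda_u$ is $\mathbb{G}$-adapted (indeed $\mathbb{G}$-predictable, being the integral of a $\mathbb{G}$-predictable integrand against the continuous increasing $\Lambda$) and locally integrable, this is exactly the statement that $N_t - \int_0^t Z(\vartheta,u)\,d\Lambda_u$ is a $\mathbb{G}$-martingale. Strictly one should first establish local integrability / that $\int_0^t Z(\vartheta,u)\,d\Lambda_u < \infty$; I would handle this by a stopping-time localization, or simply note it follows from the identity applied with $V = \mathbf{1}_\Omega$ and $N$ being a counting process with $E_{\mathbb P}N_t < \infty$ on the relevant horizon, so that $Z\cdot\Lambda$ has finite expectation.

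The main obstacle I anticipate is the careful bookkeeping in the reduction $\widetilde\theta(\bar U\bar X)\mapsto E_{\mathbb P}(\mathbf{1}_A\mathbf{1}_B(\vartheta)\int_s^t Z(\vartheta,u)\,d\Lambda_u)$: one must be sure that the $\bar{\mathbb{F}}$-predictable projection ${}^p$ appearing inside $\widetilde\theta$ interacts correctly with the density $\bar U$ and with the indicator $\mathbf{1}_B$, i.e. that $\bar U$ can be pulled inside as a predictable factor and that pulling back along $\Phi$ converts $\bar{\mathbb{F}}$-predictable projections into $\mathbb{G}$-predictable ones — this rests on the transfer properties of $\Phi$ quoted from \cite{Ankirchner2005}. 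Once the equality $\bar\pi = \bar\theta$ on $\mathcal{P}(\bar{\mathbb{F}})$ and the Radon-Nikodym step are in place, everything else is the standard $\pi$-system/monotone-class routine, so I would keep that part brief.
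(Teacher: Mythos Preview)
Your approach is essentially the paper's: identify $E_{\mathbb P}(\mathbf 1_A\mathbf 1_B(\vartheta)(N_t-N_s))=\bar\pi(\bar X)=\bar\theta(\bar X)$, apply the Radon--Nikodym hypothesis to pass to $\widetilde\theta(\bar U\bar X)$, unfold the definition of $\widetilde\theta$, and then remove the predictable projection. The paper stops at the generating sets and does not spell out the $\pi$-system/monotone-class step you add; that addition is harmless.

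There is one slip in your justification of the last step. The projection ${}^p$ in the definition of $\widetilde\theta$ is the $\mathbb F$-predictable projection, not the $\mathbb G$-predictable one. The process $\mathbf 1_B(\vartheta)Z(\vartheta,\cdot)$ is $\mathbb G$-predictable, but it is \emph{not} $\mathbb F$-predictable (it depends on $\vartheta$), so it is certainly not equal to its own $\mathbb F$-predictable projection. The equality
\[
E_{\mathbb P}\biggl(\mathbf 1_A\int_s^t {}^p\bigl(\mathbf 1_B(\vartheta)Z(\vartheta,\cdot)\bigr)_u\,d\Lambda_u\biggr)
=E_{\mathbb P}\biggl(\mathbf 1_A\int_s^t \mathbf 1_B(\vartheta)Z(\vartheta,u)\,d\Lambda_u\biggr)
\]
holds for a different reason: since $A\in F_s$, the process $\mathbf 1_A\mathbf 1_{(s,t]}$ is $\mathbb F$-predictable, so $\mathbf 1_A\mathbf 1_{(s,t]}\cdot{}^pX={}^p(\mathbf 1_A\mathbf 1_{(s,t]}X)$, and then one invokes the defining property of the $\mathbb F$-predictable projection against the $\mathbb F$-predictable increasing process $\Lambda$, namely $E\int {}^pX\,d\Lambda=E\int X\,d\Lambda$. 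This is exactly what the paper cites as \cite[Theorem V.5.16, 2)]{Chinese}. Once you replace your ``equal to its own predictable projection'' sentence with this argument, the proof is complete and coincides with the paper's.
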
 

\begin{proof} 
The process $\bar U  $ is $\bar {\mathbb F}$- predictable
by the Radon-Nikodym theorem
, and using results of \cite{Ankirchner2005} we have that
$Z$ is $\mathbb G$-predictable. 

Now, let  $A\in F_s$, $B\in {\mathcal B}(\mathbb R^m)$, and with 
$0\le s\le t $,

\begin{eqnarray*}  
E_{\mathbb P}\left( \mathbf 1 _ A \mathbf 1 _B \left(\vartheta\right) \left(N_t-N_s\right)\right)
& = &  \bar\pi(  A\times B \times (s,t]) \\
\mbox{since } \bar\pi |\mathcal P(\bar{\mathbb F}) = \bar\theta |\mathcal P (\bar{\mathbb F} )  & =&  \int\limits_{\Omega\times\mathbb R^m \times [0,\infty) } \mathbf{1}_A(\omega) \mathbf{1}_B(\ell) \mathbf{1}_{(s,t]}(u)\bar{\theta}(d\omega,d\ell,du)\\ 
\mbox{assumption } \bar\theta \ll \widetilde \theta &=&
\int\limits_{\Omega\times\mathbb R^m \times [0,\infty) } \mathbf{1}_A(\omega) \mathbf{1}_B(\ell) \mathbf{1}_{(s,t]}(u) \bar U( \omega,\ell,u) \widetilde{\theta}(d\omega,d\ell,du) \\
\mbox{by } (\ref{e:two})  &=&
E_{\mathbb P}\left( \mathbf{1}_A \int_s^t \ ^p(\mathbf{1}_B(\vartheta) Z(\vartheta,\cdot) )_u d\Lambda_u\right)\\
&=&
E_{\mathbb P}\left( \mathbf{1}_A \int_s^t \mathbf{1}_B(\vartheta) Z(\vartheta,u) d\Lambda_u\right),
\end{eqnarray*}
where the last equality follows from the property of predictable projection \cite[Theorem V.5.16, 2)]{Chinese}.
This proves the main claim. 
\end{proof}

\begin{remark}\label{remark2}
Using the Corollary \ref{c:mrt} we get 
$$
E _{\mathbb P} (f(\vartheta) | G_t) = {}^of(\vartheta )_t = {}^of(\vartheta )_{t-} + 
\left(\widehat{ {}^o f(\vartheta) }_t - {}^of(\vartheta) _{t-}\right) \Delta N_t ;
$$ 
and this in turn gives 
\begin{eqnarray*} 
E_{\mathbb P}( f(\vartheta)\big\vert G_t) \Delta N_t =( \widehat{ {}^o f(\vartheta)})_t \Delta N_t.
\end{eqnarray*}
Therefore
we have the interpretation
\begin{eqnarray*} 
\widehat{ {}^o f(\vartheta) }_t = E_{\mathbb P}( f(\vartheta) \big\vert F_{t-} , \Delta N_t = 1 )= \frac{ E_{\mathbb P}( f(\vartheta) \Delta N_t \big\vert F_{t-} ) } 
 { E_{\mathbb P}( \Delta N_t  \big\vert F_{t-} ) }.
\end{eqnarray*} 
\end{remark}

\begin{remark} \label{remark3}
We give an interpretation of the condition $\bar \theta \ll \widetilde \theta $.

First, consider the formal disintegration of measure

 \begin{eqnarray*} 
{\bar\theta } (d\omega,d\ell,dt) = \mathbb{P}(d\omega)\bar \nu(d\ell,dt;\omega)
\end{eqnarray*} 

Here  we can interpret 

\begin{eqnarray*}  
\bar \nu(d\ell ,dt) &=& \mathbb{P}(\vartheta\in d\ell,N(dt)=1 \big\vert F_{t-}) \\
& = & \mathbb{P}(\vartheta\in d\ell\big\vert F_{t-})P( N(dt) = 1 \big\vert F_{t-},\vartheta \in d \ell)\\
& =: &\mathbb{P}(\vartheta\in d\ell \big\vert F_{t-})\Lambda^{\ell}(dt). 
\end{eqnarray*}

On the other hand, 
 we also have the disintegration
\begin{eqnarray*} 
\widetilde {\theta } (d\omega,d\ell,dt)  =\widetilde \nu(d\ell,dt;\omega)\mathbb{P}(d\omega) ,
\end{eqnarray*} 
and from (\ref{e:two}) we have 
 \begin{eqnarray*} 
\widetilde\nu(d\ell,dt) =  \mathbb{P}(\vartheta\in d\ell \big\vert F_{t-})\Lambda(dt) .
\end{eqnarray*} 

Now, if  $\bar\theta \ll \widetilde {\theta}$ then 

\begin{eqnarray*} 
\bar U ( \omega,\ell, t)&=&\frac{  d \bar\theta  }
{  d\widetilde \theta } (\omega,\ell,t) 
 \bigg\vert_{{\mathcal P}(\bar{\mathbb{F}})}\\
& =&  \frac{d\bar\nu}
{d\widetilde\nu}(\ell,t) = \frac{ d \mathbb P ( \vartheta \in \cdot \;| F_{t-}, \Delta N_t=1 ) }
{ d \mathbb P( \vartheta \in \cdot \; | F_{t-} ) } ( \ell,\omega) =
 \frac{d\Lambda_t^{\ell} }{d\Lambda_t}(t,\omega).
\end{eqnarray*}

Moreover, we have the connection 
$$
\Lambda ^{\ell}_t = \int _0^t Z(\ell,s)d \Lambda _s .
$$

\end{remark}

\begin{remark}
When the absolute continuity condition fails,
 the Lebesgue decomposition  on  $ {\mathcal P}(\bar{\mathbb{F}})$
\begin{eqnarray*}
  \bar\theta( d\omega,d\ell,dt)= {\bf 1}(  U ( \omega,\ell, t)  < \infty )U ( \omega,\ell, t)  \widetilde\theta( d\omega,d\ell,dt)
+ {\bf 1}(  U ( \omega,\ell, t)  = \infty )\bar\theta( d\omega,d\ell,dt) 
\end{eqnarray*} corresponds  to the Lebesgue decomposition of the $\mathbb{G}$-compensator
\begin{eqnarray*}
 \Lambda ^{\vartheta }(dt) =  {\bf 1}( Z(\vartheta ,t)<\infty) Z(\vartheta ,t) \Lambda(dt) + {\bf 1}( Z(\vartheta ,t)=\infty)  \Lambda ^{\vartheta }(dt)
\end{eqnarray*}
\end{remark}

About the singular part of $\Lambda^{\vartheta }$, at this level of generality we cannot say much more than this:

\begin{proposition} \label{singular:part}
In the   $\mathbb{G}$-filtration the jumps of $N$ are decomposed 
into two classes, $\mathbb{G}$-accessible and  $\mathbb{G}$-totally inaccessible
\cite[Chapter IV]{Chinese}.
The next conditions are equivalent: 
\begin{itemize}
 \item   $\mathbb P$-almost surely 
\begin{eqnarray}
 \label{singular:compensator}
\int_0^t {\bf 1}\bigl( Z(\vartheta ,s)=\infty\bigr) \Lambda^{\vartheta } (ds)
= \sum_{s\le t} {\bf 1}\bigl(   Z(\vartheta,s)=\infty \bigr) \Delta\Lambda^{\vartheta }_s
\end{eqnarray}
i.e. the singular part of the $\mathbb{G}$-compensator is purely discontinuous
\item
The  $\mathbb{G}$-compensator of the $\mathbb{G}$-totally inaccessible part of $N$
 is  absolutely continuous w.r.t. $\Lambda$,
and (\ref{singular:compensator})
is the  $\mathbb{G}$-compensator of the
  $\mathbb{G}$-accessible jumps of $N$.
\end{itemize}
\end{proposition}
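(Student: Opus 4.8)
The plan is to deduce everything from the Lebesgue decomposition of the $\mathbb{G}$-compensator recorded in the preceding remark,
$$
\Lambda^{\vartheta}(dt)={\bf 1}\bigl(Z(\vartheta,t)<\infty\bigr)Z(\vartheta,t)\,\Lambda(dt)+{\bf 1}\bigl(Z(\vartheta,t)=\infty\bigr)\Lambda^{\vartheta}(dt)=:\Lambda^{\vartheta,\mathrm{ac}}(dt)+\Lambda^{\vartheta,\mathrm{sing}}(dt),
$$
where $\Lambda^{\vartheta,\mathrm{ac}}\ll\Lambda$ and $\Lambda^{\vartheta,\mathrm{sing}}\perp\Lambda$, and to confront it with the decomposition $\Lambda^{\vartheta}=(\Lambda^{\vartheta})^{c}+(\Lambda^{\vartheta})^{d}$ of the $\mathbb{G}$-predictable increasing process $\Lambda^{\vartheta}$ into its continuous part and its purely discontinuous part. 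The only external ingredient is the classical correspondence \cite[Chapter IV]{Chinese} between this latter split and the split $N=N^{i}+N^{a}$ of $N$ into its $\mathbb{G}$-totally inaccessible and $\mathbb{G}$-accessible jumps: the $\mathbb{G}$-compensator of $N^{i}$ is $(\Lambda^{\vartheta})^{c}$ and that of $N^{a}$ is $(\Lambda^{\vartheta})^{d}$, because the jumps of the $\mathbb{G}$-predictable process $\Lambda^{\vartheta}$ sit on graphs of $\mathbb{G}$-predictable times and, at such a time $\sigma$, $\Delta\Lambda^{\vartheta}_{\sigma}=\mathbb{P}(\Delta N_{\sigma}=1\mid G_{\sigma-})$.

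The first substantive step uses that the $\mathbb{F}$-compensator $\Lambda$ is continuous (Assumption 1): then $\Lambda^{\vartheta,\mathrm{ac}}$, having a density with respect to the continuous measure $\Lambda$, is itself continuous, so that $\Delta\Lambda^{\vartheta}_{s}=\Delta\Lambda^{\vartheta,\mathrm{sing}}_{s}={\bf 1}(Z(\vartheta,s)=\infty)\Delta\Lambda^{\vartheta}_{s}$ and hence
$$
(\Lambda^{\vartheta})^{d}_{t}=\sum_{s\le t}\Delta\Lambda^{\vartheta}_{s}=\sum_{s\le t}{\bf 1}\bigl(Z(\vartheta,s)=\infty\bigr)\Delta\Lambda^{\vartheta}_{s}.
$$
Thus the right-hand side of (\ref{singular:compensator}) always equals $(\Lambda^{\vartheta})^{d}$, which by the correspondence above is the $\mathbb{G}$-compensator of $N^{a}$, so the last clause of the second bullet holds automatically; and since the left-hand side of (\ref{singular:compensator}) is $\Lambda^{\vartheta,\mathrm{sing}}_{t}$, equation (\ref{singular:compensator}) is exactly the assertion $\Lambda^{\vartheta,\mathrm{sing}}=(\Lambda^{\vartheta})^{d}=(\Lambda^{\vartheta,\mathrm{sing}})^{d}$, i.e. the singular part $\Lambda^{\vartheta,\mathrm{sing}}$ has no continuous component.

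It then remains to match this with the first clause of the second bullet, namely that $(\Lambda^{\vartheta})^{c}$, the $\mathbb{G}$-compensator of $N^{i}$, be $\ll\Lambda$. If $\Lambda^{\vartheta,\mathrm{sing}}$ is purely discontinuous then $(\Lambda^{\vartheta})^{c}=\Lambda^{\vartheta,\mathrm{ac}}\ll\Lambda$, giving one implication. Conversely, $(\Lambda^{\vartheta})^{c}=\Lambda^{\vartheta,\mathrm{ac}}+(\Lambda^{\vartheta,\mathrm{sing}})^{c}$ with $\Lambda^{\vartheta,\mathrm{ac}}\ll\Lambda$, so the hypothesis $(\Lambda^{\vartheta})^{c}\ll\Lambda$ forces $(\Lambda^{\vartheta,\mathrm{sing}})^{c}\ll\Lambda$; but $(\Lambda^{\vartheta,\mathrm{sing}})^{c}\le\Lambda^{\vartheta,\mathrm{sing}}$ as measures, hence is carried by the same $\Lambda$-null set and is thus also $\perp\Lambda$, so $(\Lambda^{\vartheta,\mathrm{sing}})^{c}=0$, which is (\ref{singular:compensator}). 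Assembling the two directions yields the stated equivalence.

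I expect the only delicate point to be the classical correspondence invoked in the first paragraph, which should be spelled out with care: $\{\Delta\Lambda^{\vartheta}>0\}$ lies in a countable union of graphs of $\mathbb{G}$-predictable times (because $\Lambda^{\vartheta}$ is $\mathbb{G}$-predictable of finite variation), at such a time $\sigma$ one has $\Delta\Lambda^{\vartheta}_{\sigma}={}^{p}(\Delta N)_{\sigma}=\mathbb{P}(\Delta N_{\sigma}=1\mid G_{\sigma-})$, and then the uniqueness of the dual $\mathbb{G}$-predictable projection, combined with the unique continuous/purely discontinuous splitting of $\mathbb{G}$-predictable increasing processes, pins down $(\Lambda^{\vartheta})^{c}$ and $(\Lambda^{\vartheta})^{d}$ as the $\mathbb{G}$-compensators of $N^{i}$ and $N^{a}$ respectively; everything after that is the measure-theoretic bookkeeping sketched above.
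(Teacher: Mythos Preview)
The paper states this proposition without proof (it is followed immediately by a remark and the next section), so there is no authors' argument to compare against. Your proof is correct and is essentially the natural one: reduce everything to the two decompositions $\Lambda^{\vartheta}=\Lambda^{\vartheta,\mathrm{ac}}+\Lambda^{\vartheta,\mathrm{sing}}$ and $\Lambda^{\vartheta}=(\Lambda^{\vartheta})^{c}+(\Lambda^{\vartheta})^{d}$, use the continuity of $\Lambda$ to see that $\Lambda^{\vartheta,\mathrm{ac}}$ contributes only to $(\Lambda^{\vartheta})^{c}$, and then invoke the standard correspondence $(N^{i})^{p}=(\Lambda^{\vartheta})^{c}$, $(N^{a})^{p}=(\Lambda^{\vartheta})^{d}$.

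The step you flag as delicate is indeed the only one that requires care, and your outline of it is the right one. To make it watertight: since $\Lambda^{\vartheta}$ is $\mathbb{G}$-predictable increasing, its jump set is contained in a countable union of graphs of $\mathbb{G}$-predictable times $(\sigma_{n})$ with disjoint graphs; the accessible jumps of $N$ are exhausted by the same family because $\Delta\Lambda^{\vartheta}_{\sigma}=\mathbb{P}(\Delta N_{\sigma}=1\mid G_{\sigma-})$ for predictable $\sigma$, so $\sigma$ carries an accessible jump of $N$ with positive probability exactly when $\Delta\Lambda^{\vartheta}_{\sigma}>0$; and then $(N^{a})^{p}_{t}=\sum_{n}{\bf 1}(\sigma_{n}\le t)\,\mathbb{P}(\Delta N_{\sigma_{n}}=1\mid G_{\sigma_{n}-})=\sum_{n}{\bf 1}(\sigma_{n}\le t)\,\Delta\Lambda^{\vartheta}_{\sigma_{n}}=(\Lambda^{\vartheta})^{d}_{t}$, while $(N^{i})^{p}=\Lambda^{\vartheta}-(N^{a})^{p}=(\Lambda^{\vartheta})^{c}$. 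With this in hand your measure-theoretic bookkeeping goes through verbatim.
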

\begin{remark} We will see that in our initially enlarged Markov Chain market model 
we are in the situation described in Proposition
 \ref{singular:part}.
\end{remark}

\section{Scenarios and support of the predictive distribution}

\subsection{Shrinkage}\label{ss:shrinkage}
We start with an useful Lemma, which helps to compute compensators. 
We assume now that the filtration $\mathbb F$ is the filtration of the 
Markov process $Y$, and the random variable in the initial enlargement is the 
logarithm of final value of the stock: $\vartheta = \log (S_T)$: 
$F_t = \sigma \{ Y_s : s\le t \}$ and $G_t = \cap_{u> t} F_u \vee \sigma (\vartheta ) $.  
In addition to the random variable $\vartheta $ we enlarge the filtration $\mathbb F$ 
with the realized scenario $\zeta = H_T $,
where $H_T = (N_T ; Y_0, Y_{\tau _1 }, \dots, Y_{\tau _{N_T}}) $ 
(see \ref{ss:scenario} for more details). Note that the random variable $\chi$ can take only 
countably many values.

\begin{lemma} \label{countable-shrinkage}  
Assume that $(\vartheta,\zeta )\in F^Y_T$,  
and the random variable $\zeta $ takes in a countable set, say $
\zeta(\omega)\in {\mathbb{Z}}$
without loss of generality.

Let $\mathbb{G} ^\zeta =(G_t^{\zeta })_{t\in[0,T]}$ 
with  $G^{\zeta }_t= \cap _{u > t} G_u \vee\sigma(\zeta )$ be a bigger filtration than 
$\mathbb{G}$. Then we have the \emph{filtration shrinkage } formula
\begin{eqnarray*} 
 \Lambda^{\vartheta}_t =(\Lambda^{\vartheta,\zeta  })_t^{\mathbb{G}, p} = 
\sum_{z} \int_0^t \mathbb{P}( \zeta = z \big\vert F_{s-}, \vartheta ) \Lambda^{\vartheta,\zeta  =z} (ds) 
\end{eqnarray*} 
where the  $\mathbb{G}$- predictable processes $\{ \Lambda^{\vartheta,\zeta=z}_t(\omega) :\, z\in \mathbb{Z}\; 
\mathrm{and}\; \mathbb{P}(\zeta =z)>0\}$ 
gives the disintegration of the $\mathbb {G}^{\zeta}$ compensator.
\begin{eqnarray*} 
\Lambda^{\vartheta,\zeta }_t  =  
\sum_{z: \mathbb{P}(\zeta =z)>0 } \int_0^s  {\bf 1}_{\{ \zeta = z \}}  \Lambda^{\vartheta,\zeta =
z} (ds)
\end{eqnarray*} 
\end{lemma}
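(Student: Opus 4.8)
The plan is to exploit the fact that $\zeta$ takes only countably many values, which reduces the problem to conditioning on a countable partition of $\Omega$, and then to apply the filtration shrinkage principle (passing from the $\mathbb{G}^\zeta$-compensator to the $\mathbb{G}$-compensator by taking a predictable projection).

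\textbf{Step 1: Disintegration in the larger filtration.} First I would observe that $\mathbb{G}^\zeta$ is an initial enlargement of $\mathbb{F}^Y$ by the pair $(\vartheta,\zeta)$, so Theorem \ref{compensator} (with the random variable $(\vartheta,\zeta)$ in place of $\vartheta$) applies and produces a $\mathbb{G}^\zeta$-predictable compensator which I write as $\Lambda^{\vartheta,\zeta}$. Since $\zeta$ is $\mathcal{G}^\zeta_0$-measurable and countably valued, on each event $\{\zeta=z\}$ with $\mathbb{P}(\zeta=z)>0$ the filtration $\mathbb{G}^\zeta$ restricted to that event coincides with $\mathbb{G}$ enlarged by $\{\zeta=z\}$, and the $\mathbb{G}^\zeta$-predictable compensator restricted there is a $\mathbb{G}$-predictable process, call it $\Lambda^{\vartheta,\zeta=z}$; this gives the stated disintegration
\begin{eqnarray*}
\Lambda^{\vartheta,\zeta}_t = \sum_{z:\,\mathbb{P}(\zeta=z)>0} \int_0^t {\bf 1}_{\{\zeta=z\}}\,\Lambda^{\vartheta,\zeta=z}(ds).
\end{eqnarray*}
The routine point here is measurability: one checks that $\Lambda^{\vartheta,\zeta=z}$ can be chosen $\mathbb{G}$-predictable because on $\{\zeta=z\}$ the $\sigma$-algebra $\mathcal{G}^\zeta_t$ adds nothing beyond $\mathcal{G}_t$.

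\textbf{Step 2: Shrink from $\mathbb{G}^\zeta$ to $\mathbb{G}$.} Next I would use the general filtration-shrinkage fact: if $N-\Lambda^{\vartheta,\zeta}$ is a $\mathbb{G}^\zeta$-local martingale, then the $\mathbb{G}$-compensator of $N$ is the $\mathbb{G}$-dual predictable projection of $\Lambda^{\vartheta,\zeta}$, i.e. $\Lambda^\vartheta = (\Lambda^{\vartheta,\zeta})^{\mathbb{G},p}$. Applying the dual predictable projection to the sum from Step 1 and using that each $\Lambda^{\vartheta,\zeta=z}$ is already $\mathbb{G}$-predictable, the projection acts only on the coefficients ${\bf 1}_{\{\zeta=z\}}$, producing $^p({\bf 1}_{\{\zeta=z\}})$ with respect to $\mathbb{G}$. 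Since $\mathcal{G}_{s-} = \mathcal{F}^Y_{s-}\vee\sigma(\vartheta)$, the predictable projection of ${\bf 1}_{\{\zeta=z\}}$ integrated against the (continuous, hence predictable) measure $\Lambda^{\vartheta,\zeta=z}(ds)$ may be replaced by its optional/left version $\mathbb{P}(\zeta=z\mid \mathcal{F}^Y_{s-},\vartheta)$, exactly as in Corollary \ref{c:mrt}(ii) and the argument preceding Theorem \ref{compensator}. This yields
\begin{eqnarray*}
\Lambda^\vartheta_t = \sum_z \int_0^t \mathbb{P}(\zeta=z\mid \mathcal{F}^Y_{s-},\vartheta)\,\Lambda^{\vartheta,\zeta=z}(ds),
\end{eqnarray*}
which is the claimed formula. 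Interchanging the (countable) sum with the integral and the projection is justified by monotone convergence since all terms are nonnegative.

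\textbf{Main obstacle.} The delicate step is Step 2: justifying that the $\mathbb{G}$-dual predictable projection of the $\mathbb{G}^\zeta$-compensator is indeed the $\mathbb{G}$-compensator of $N$, and that this projection commutes with the countable sum and with the integration against the random measures $\Lambda^{\vartheta,\zeta=z}(ds)$. This requires that the processes involved be locally integrable (so the dual predictable projections exist and the shrinkage identity holds), and it requires knowing that $^p({\bf 1}_{\{\zeta=z\}}\cdot \Lambda^{\vartheta,\zeta=z}) = {}^p({\bf 1}_{\{\zeta=z\}})\cdot \Lambda^{\vartheta,\zeta=z}$, which is where $\mathbb{G}$-predictability of $\Lambda^{\vartheta,\zeta=z}$ and continuity of $\Lambda$ (hence of all the compensators in play, so that the left-limit version of the projection suffices) enter decisively. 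Once these measurability and integrability points are in place the computation is a direct application of the predictable projection property \cite[Theorem V.5.16]{Chinese}.
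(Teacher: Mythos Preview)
Your approach is essentially the same as the paper's: both decompose over the countable values of $\zeta$, use the $\mathbb{G}^{\zeta}$-compensator on each piece $\{\zeta=z\}$, and then replace ${\bf 1}_{\{\zeta=z\}}$ by its $\mathbb{G}$-predictable projection $\mathbb{P}(\zeta=z\mid G_{s-})=\mathbb{P}(\zeta=z\mid F_{s-},\vartheta)$, invoking the predictable-projection property since $\Lambda^{\vartheta,\zeta=z}$ is $\mathbb{G}$-predictable. The paper does this as a direct verification on test sets $A\in G_s$, while you phrase it via the abstract identity $\Lambda^{\vartheta}=(\Lambda^{\vartheta,\zeta})^{\mathbb{G},p}$; the underlying computation is identical.

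One correction: you write that $\Lambda^{\vartheta,\zeta=z}$ is ``continuous, hence predictable''. This is false in general and in fact is the central phenomenon of the paper --- in the enlarged filtration some jump times may become accessible or predictable, so the enlarged compensators can and do have jumps. What makes the step go through is not continuity but the $\mathbb{G}$-\emph{predictability} of $\Lambda^{\vartheta,\zeta=z}$ (which you also state), so that $({\bf 1}_{\{\zeta=z\}}\cdot\Lambda^{\vartheta,\zeta=z})^{p}={}^{p}({\bf 1}_{\{\zeta=z\}})\cdot\Lambda^{\vartheta,\zeta=z}$ holds by \cite[Theorem V.5.16]{Chinese}. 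Drop the continuity claim and the argument is clean.
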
 

\begin{proof}
For more general results of this type, see \cite{Protter2005}. We prove the result in this simple case. 
Let $s\le t $ and $A\in G_s$. We have
\begin{eqnarray*} 
 E_{\mathbb P}\bigl( 1_A  (N_t -N_s)\bigr) &= & 
 E_{\mathbb P}\biggl( \sum_{z }
  {\bf 1}_{ A \cap\{  \zeta = z \} }( N_t -N_s )\biggr)\\
& = &
  E_{\mathbb P}\biggl( \sum_{z}  {\bf 1}_{ A \cap\{  \zeta= z \} }  (\Lambda^{\vartheta, \zeta}_t -\Lambda_s^{\vartheta,\zeta }) \biggr)
 \end{eqnarray*}
 where the sum is taken over the values $z $ with  $\mathbb{P}( \zeta =z)>0$. But on the set $\{ \zeta = z\} $ we have the identity 
${\bf 1}\{ \zeta =z\} \Lambda ^{\vartheta,\zeta}_u = {\bf 1}\{ \zeta =z\} \Lambda ^{\vartheta,z}_u $.  We obtain
\begin{eqnarray*}  
 E_{\mathbb P}\biggl( \sum_{z}  {\bf 1}_{ A \cap\{  \zeta = z \} }  (\Lambda^{\vartheta,\zeta}_t -\Lambda_s^{\vartheta,\zeta }) \biggr) &= &    
 E_{\mathbb P}\biggl(   \sum_{z} {\bf 1}_{ A \cap\{  \zeta= z \} }  (\Lambda^{\vartheta,\zeta =z}_t -\Lambda_s^{\vartheta,\zeta =z}) \biggr) \\
   E_{\mathbb P}\biggl( \sum_{z} \int_s^t  { }^{\mathbb{G}, p }({\bf 1}_{ A \cap\{  \zeta = z \} })_u  d\Lambda^{\vartheta,\zeta =z}_u \biggr)  &=&
E_{\mathbb P}\biggl(  \sum_{z}  1_A \int_s^t  { }^{\mathbb{G}, p } ({\bf 1}_{ \{  \zeta = z \} })_u  d\Lambda^{\vartheta,\zeta =z}_u \biggr)\\ &=&
E_{\mathbb P}\biggl(  \sum_{z}  1_A \int_s^t  \mathbb{P}(  \zeta= z \big\vert G_{u-} ) d\Lambda^{\vartheta,\zeta =z}_u \biggr) 
 \end{eqnarray*}
since   $\Lambda ^{\vartheta, \zeta= z}$ is $\mathbb{G} $- predictable,
 $A \in G_s $, and by the definition of predictable projection.
\end{proof}

\begin{remark} Lemma \ref{countable-shrinkage} gives a way to compute the
compensator $\Lambda^\vartheta$ by  using an additional countable enlargement.
We have also
\begin{eqnarray*} && 
 \Lambda^\vartheta\ll \Lambda\quad\quad \Longleftrightarrow \quad\quad  
\Lambda^{\vartheta,\zeta} \ll \Lambda
\\ &&\Longleftrightarrow \quad\quad\Lambda^{\vartheta,z} \ll  \Lambda 
\quad\quad\forall\, z\in \mathbb Z\;\; \mathrm{and}\;\; \mathbb{P}(\zeta =z)>0
\end{eqnarray*} 
\end{remark}

\subsection{More on scenarios}
\subsubsection{Random scenarios}
We have already introduced the notion of \emph{scenario} in the Section \ref{ss:scenario}. Now 
we will assume that in addition to the final value the insider has  at his disposal the information, 
which states the Markov process $Y$ visited before the time $T$. 

Let $\Xi _{e_0}$ be the set of all possible scenarios starting from $e_0$ and let $\Xi$ be the set of all 
possible scenarios: 
$$
\Xi_{e_0}= \{ h= (n; e^{0:n}) : n \in \mathbb N , \lambda ^{e_ie_{i+1}} > 0, i= 0, \dots , n-1 \}
$$   and  $ \Xi=\cup_{e_0\in \mathcal Y  } \Xi_{e_0} $. Note that $\Xi $ is numerable. 

Recall that $\tau _i $ is the $i^{\mbox{th}}$ jump time of the economy and then $H_t$ is the random scenario
$$
H _t(\omega ) = (N_t(\omega ) : Y_0, Y_{\tau _1}, \dots, Y_{\tau _{N_t}} ).
$$
and  $ H_{T}$ is the random scenario $H_T  = (N_T : Y _0, Y_{\tau _1}, \dots , Y_{\tau _{N_T}} )$.

\subsubsection{Operations with scenarios}
To analyze the scenarios dynamically we need the following operations with them.
Let $h= (n;e^{0:n}) $ and $\widetilde h= (m;{\widetilde e}^{0:m})$ be two scenarios. Put 
$$
h_{(k)} = (n\wedge k;  e^{0:n\wedge k}) , h^{(k)} = ((n-k)^+ ; e^{n\wedge k:n}) 
$$
and 
$
h\vee \widetilde h = (n+m; e^{0:n},{\widetilde e}^{0:m}) $; here we assume that $e^n = \widetilde e^0$.

With these notations $h^{(0)}=h= h_{(k)} \vee h^{(k) } =h_{(n)}\vee h^{(n)}$.

 Let $h= (n; e^{0:n}) $ be a fixed scenario, and put 
\begin{eqnarray*}
\Pi_{e_0,t}(h):= \mathbb{P}({H}_t= h \big\vert Y_0=e_0) 
\end{eqnarray*}
Note that for every $h \in \Xi_{e_0} $ we have that $\Pi _{e_0,T}(h) > 0 $.

We have $ \Pi_{e_0,T}((0;e_0))= \exp(- \lambda^{e_0} T )$ and with $h= (n; e^{0:n}) $, when 
$n\ge 1 $, we have the recursion:
$$
\Pi_{e_0,T}( h ) = \int_0^T  \lambda^{e_0,e_1}
  \exp(- \lambda^{e_0} t ) \Pi_{e_1,{T-t}}( h^{(1)}) dt .
$$

To summarize what we have achieved by now:

$\Pi_{e_0,T}( h ) >0 $ if and only if $h\in \Xi_{e_0}$  
and we have the implications
\begin{eqnarray*}
 \Pi_{e_0,T}( h )>0\quad  \Longrightarrow\quad  \Pi_{e_0,t}(h)>0 \quad  \mbox{  for all }   t>0 ;
\end{eqnarray*}
this means that if a fixed  scenario $h$  has positive probability on the interval $[0,T]$, 
it has a positive  probability on every sub-interval $[0,t]$, too. Finally, using the 
identity  $h= (h_{(k)}\vee h^{(k)})$, we have the following implications for all $t\in (0,T)$: 
\begin{eqnarray*}
 \Pi_{e_0,T}(h)> 0 \quad \Longrightarrow \Pi_{e_0,t}(h_{(k)}) > 0, \mbox{ and }  \\
 \Pi_{e_0,T}( h )>0\quad  \Longrightarrow\quad  \Pi_{e_k,T-t}(h^{(k)})>0 .
\end{eqnarray*}

\subsection{Joint distribution of $L_t$ and $H_t$}

Recall that $L_t = \log ( S_ t ) $, where $ S$ is the discounted stock vector, and 
$  H_t $ is the random scenario $H_t = (N_t; Y_0, \dots , Y_{N_t})$.  We denote their joint 
distribution by $Q$.

Put $Q_{e_0,t}( d\ell, (0;e_0) ) := \mathbb{P}(L_t \in d\ell, {H}_t=(0;e_0))$. 
Note first that 
\begin{eqnarray*}
Q_{e_0,t}( d\ell, (0;e_0) )  = \exp( -\lambda^{e_0}  t) \delta_{ \mu^{e_0}t } (d\ell) .\end{eqnarray*}
After this we can proceed recursively
\begin{eqnarray}\label{recursive_measure}
 Q_{e_0,t} (d\ell,h)  &= &  \mathbb{P}(L \in d\ell, {H}_t = h)\nonumber \\ &  = &
\int_0^t  \lambda^{e_0,e_1}\exp( -\lambda^{e_0} u) Q_{e_1,t-u}(d\ell-\mu^{e_0}u-  \beta^{e_0,e_1},h^{(1)})\,du.
\end{eqnarray}
From the joint distribution $Q_{e_0, T } (d\ell , h) $ we obtain the marginal distribution 
\begin{eqnarray*}
Q_{e_0, T} (d \ell ) = \sum _{h\in \Xi _{e_0} } Q_{e_0,T} (d\ell, h ) . 
\end{eqnarray*}
\subsection{Support of the conditional measure $Q_{e_0,t} ( \cdot | h ) $}

When $h \in \Xi_{e_0}$, then the conditional probability 
$Q_{e_0,t}(d\ell | h)= \frac{Q_{e_0, t}(d \ell,h)}{\Pi_{e_0,t}(h)}$ is well defined, since $\Pi_{e_0,t}(h) > 0 $ for $0<t\le T $.

Fix $h = (n; e^{0:n})$ and put $\beta ^{\overline{0:n}} = \beta ^{e_0e_1} + \cdots + \beta ^{e_{n-1}e_n} \in \mathbb R ^m$. 
The support of the conditional measure 
 $Q_{e_0,T}(d\ell\big\vert h)$ is obviously the convex hull of the set
\begin{equation}\label{eq:pre-supp}
\bigl\{ \; L_0+ \beta ^{\overline{0:n}} +\mu^{e_i}T \;: \; i=0,\dots, n \; \bigr\}.
\end{equation}
We denote this convex hull by $\mathcal A _T(h) $. Fix $0<s<T$, and consider the convex hull 
$\mathcal A_{T-s}(h^{(N_s)})$
of the random set 
\begin{equation}\label{eq:pre-supp-s}
 \bigl\{ \; L_s + \beta ^{\overline{N_s:n}} +\mu^{e_i} (T-s) \;: \; i=N_s,\dots, n \; \bigr\}.
\end{equation}

Then, either $Q_{e_0,T}(d\ell\big\vert h)$ is a point mass, which happens if and 
only if  $ \mu^{e_i} = \mu^{e_0}$ 
for all $i=1,\dots,n$, or $Q_{e_0,T}(d\ell\big\vert h)$ is equivalent to  Lebesgue measure on its support.

Moreover, we have 
for every $\omega$ in the canonical space, $0<s<t<T$,
  $h\in \Xi$:
\begin{eqnarray*} 
 \mathbb R^m\supset \mathcal A_T(h) \supseteq \mathcal A_{T-s}(h^{(N_s)}) \supseteq \mathcal A_{T-t}(h^{(N_t)}) ,
\end{eqnarray*} 
and by summing over the  scenarios $h\in \Xi$ we get
\begin{eqnarray*} 
 \mathbb R^m\supset\mbox{\rm supp}\;Q_{Y_s,T-s}(\;\cdot -L_s ) \supseteq 
\mbox{\rm supp}\; Q_{Y_t,T-t}(\;\cdot -L_t ) 
\end{eqnarray*} 
Since these predictive distributions are equivalent to Lebesgue measure on their support, 
the relation 
\begin{eqnarray*} 
 \mbox{\rm supp}\; Q_{Y_s,T-s}(\;\cdot -L_s | h^{(N_s)} ) 
\supseteq \mbox{\rm supp}\; Q_{Y_t,T-t}(\;\cdot -L_t \vert h^{(N_t)} )
\end{eqnarray*} 
does not imply that
\begin{eqnarray*} 
Q_{Y_s,T-s}(\;\cdot -L_s | h^{(N_s)} ) \gg Q_{Y_t,T-t}(\;\cdot -L_t \vert h^{(N_t)} ) .
\end{eqnarray*} 
This implication is true only in the case of that 
the supports of these predictive distributions have the same dimension.

More formally, put 
\begin{eqnarray*} &&
 D_T(h)   =   \mbox{dim} \mathcal A_{T}(h) \\
&&  \;=
 \max \{ p : \exists x_0,x_1,\dots, x_p \in \mathcal A_{T}(h) \mbox{ with $(x_k-x_0)$ linearly independent} \} ,
\end{eqnarray*}
where $D_T(h)= 0$, if the set ${\mathcal A}_{T}(h)$ consists of one point. Note that the value of $D_T(h) $ does not depend on 
$T$, and we write simply $D(h) $. 

Fix now $h= (n: e^{0:n}) $ and assume that $n\ge 1$. Recall that $h^{(k)}$ is the remaining scenario after 
$k$ changes in the economy. Obviously we have 
\begin{equation}\label{eq:d-k}
 D(h) \ge D(h^{(1)}) \ge D(h^{(2)}) \ge \cdots \ge D(h^{(n)}) = 0 .
\end{equation}
 
Clearly, if $D(h)=0$, then $D(h)= D(h^{(1)} ) = \cdots = D(h^{(n)}) = 0 $.
\begin{example}\label{ex:k-h-a}
 Returning to the example \ref{ex:one}   of Kohatsu-Higa we have for $0<s<T$ that
$$
\mathcal A _T(h) = \{ L_0+\beta ^{\overline{0:n}} +  \mu^{0:n} T \} = \mathcal A_{T-s}(h^{(N_s)}).
$$
hence we have $D(h) = D(h^{(1)} ) = \cdots = D(h^{(n)})=0 $. 
\end{example}

\begin{example}\label{ex:two-d}
Take $h= (n;e^{0:n}) $ with $n\ge 1 $. Then $n= n^+ + n^- $, where $n^- = \lfloor \frac{n}{2}\rfloor$, and 
$$
D(h) = \cdots = D(h^{(n-1)}) = 1 > D(h^{(n)})=0. 
$$ 
\end{example}

\subsection{Scenarios and final value}

In order to analyze, how the properties of the jump times may change with the additional information, we need more definitions.

For given $T > 0,\ell\in \mathbb R^m $, consider those scenarios $h\in \Xi_{e_0}$, $h=(n;e^{0:n})$, 
$n\in \mathbb N$, such that $\ell \in  {\mathcal A}_T(h)$, that is 
 for some $\Delta t_j  >  0 , j=0,1,\dots,n$
\begin{eqnarray}\label{eq:pred}
{\begin{matrix}
 \sum _{j=0}^n  \mu ^{e_j} 
\Delta t_j =\ell-L_0-\beta^{\overline{0:n}}, \quad\quad \mbox{ and } \\ 
\Delta t_0 +\Delta t_1 + \dots+ \Delta t_n= T \; .
\end{matrix}} 
\end{eqnarray}
Note that for given $(T,\ell,h)$, the solution vector $(\Delta t_j: 0\le j\le n)$ possibly does not
exist, and when it exists, it is not always unique.

Consider the projection
 \begin{eqnarray} &&
C_0(h):= \bigl\{ \Delta t_0>0 : \exists(\Delta t_0,\dots, \Delta t_n) \in \mathbb R_+^{n+1} 
\;\mbox{ solving }(\mbox{\ref{eq:pred}}) \bigr\} 
\\ &&
\label{GH_predictable_times}
\underline{\mathcal T}(T,\ell,h):=\inf C_0(h), \quad \overline {\mathcal T}(T,\ell,h):=\sup C_0(h)
\end{eqnarray}
When $C_0(h)\ne \emptyset$, since the solutions of (\ref{eq:pred}) form a convex set,
\begin{itemize}
\item
either $\underline {\mathcal T}(T,\ell,h) < \overline {\mathcal T}(T,\ell,h)$ and 
$C_0(h)=\bigl( \underline{\mathcal T}(T,\ell,h) , \overline {\mathcal T}(T,\ell,h) \bigr)$,
\item
or $\underline {\mathcal T}(T,\ell,h) = \overline {\mathcal T}(T,\ell,h) $ and
 $\Delta t_0$  is \emph{determined}
 by $(T,\ell)$ and the scenario 
$h=(n;e^{0:n})$.
\end{itemize}

\begin{proposition}
 We have the following characterization:
\begin{equation}\label{eq:char}\Delta t_0  \mbox{ is determined by }
 (T, \ell, h )  \Leftrightarrow \ell \in  {\mathcal A}_T(h)  \mbox{ and }
D(h^{(1)} ) + 1 = D(h) 
\end{equation}
\end{proposition}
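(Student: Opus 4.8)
The plan is to analyze the affine system (\ref{eq:pred}) and recognize that the value $\Delta t_0$ being determined is exactly the statement that the linear functional $(\Delta t_j) \mapsto \Delta t_0$ is constant on the solution set, which is an affine slice of the simplex $\{\Delta t_j > 0, \sum_j \Delta t_j = T\}$. First I would assume $\ell \in \mathcal A_T(h)$ (if not, $C_0(h) = \emptyset$ and the claim is vacuous on both sides, as the left side implicitly requires a solution to exist); under this assumption the solution set $\mathcal S = \{(\Delta t_j) \in \mathbb R_+^{n+1} : (\ref{eq:pred}) \text{ holds}\}$ is nonempty and convex. The direction space of the affine hull of $\mathcal S$ is $V = \{(v_0,\dots,v_n) : \sum_j v_j = 0, \sum_j \mu^{e_j} v_j = 0\}$, i.e.\ the kernel of the map $\mathbb R^{n+1} \to \mathbb R^{m+1}$ sending $(v_j) \mapsto (\sum_j v_j, \sum_j \mu^{e_j}v_j)$. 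Then $\Delta t_0$ is determined by $(T,\ell,h)$ if and only if the coordinate functional $v \mapsto v_0$ vanishes on $V$.

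Next I would translate the condition ``$v_0 = 0$ for all $v \in V$'' into a dimension statement. The point is that $V$ restricted to the coordinates $(v_0; v_1,\dots,v_n)$: if $v_0$ is free to be nonzero, one can build a solution perturbation that moves $\Delta t_0$ while staying in the simplex (using that all the $\Delta t_j$ in some interior solution are strictly positive, so small perturbations remain feasible). Conversely, if $v_0 \equiv 0$ on $V$, then $\Delta t_0$ takes a single value. The key linear-algebra identity I would establish is: $v_0 \not\equiv 0$ on $V$ is equivalent to $\dim \mathcal A_T(h^{(1)}) + 1 = \dim \mathcal A_T(h)$ being \emph{false}, i.e.\ to $D(h^{(1)}) = D(h)$ (recall by (\ref{eq:d-k}) that $D(h^{(1)}) \le D(h)$, with drop at most $1$ since removing one vertex $L_0 + \beta^{\overline{0:0}} + \mu^{e_0}T$ from the affine-hull-generating set lowers dimension by at most one). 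Concretely, $\mathcal A_T(h)$ is the convex hull of $\{x_i := L_0 + \beta^{\overline{0:n}} + \mu^{e_i}T : 0 \le i \le n\}$ and $\mathcal A_T(h^{(1)})$ is (up to translation by a fixed vector) the convex hull of $\{x_i : 1 \le i \le n\}$; so $D(h) = D(h^{(1)})$ precisely when $x_0$ lies in the affine hull of $x_1,\dots,x_n$, i.e.\ when $\mu^{e_0} T - \mu^{e_1}T$ is an affine combination of the $\mu^{e_i}T - \mu^{e_1}T$, $i \ge 1$ — and that is exactly the condition allowing a direction $v \in V$ with $v_0 \ne 0$ (put the ``mass'' of the perturbation on $v_0$ against a compensating affine combination on $v_1,\dots,v_n$ with total zero).

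So the sequence of steps is: (1) reduce to the case $\ell \in \mathcal A_T(h)$, noting vacuity otherwise; (2) identify $\Delta t_0$ determined $\iff$ the coordinate functional $v_0$ vanishes on the direction space $V$ of the solution polytope, using strict positivity of an interior solution to make the perturbation argument rigorous; (3) show $v_0$ vanishes on $V$ $\iff$ $x_0 \notin \mathrm{aff}(x_1,\dots,x_n)$ $\iff$ $D(h) = D(h^{(1)}) + 1$, via the explicit description of $\mathcal A_T(h)$ and $\mathcal A_T(h^{(1)})$ and the at-most-one-drop inequality. The main obstacle I expect is step (2): one must be careful that ``$\Delta t_0$ determined'' refers to the value on the \emph{feasible} set $\mathcal S \subset \mathbb R_+^{n+1}$ rather than on the full affine solution set of the two linear equations in $\mathbb R^{n+1}$; these agree because $\mathcal S$ is a full-dimensional (within its affine hull) polytope whenever nonempty — a point has strictly positive coordinates, so its relative neighborhood in $\mathrm{aff}(\mathcal S)$ stays feasible — hence a linear functional is constant on $\mathcal S$ iff it is constant on $\mathrm{aff}(\mathcal S)$ iff it annihilates $V$. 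Once this equivalence is pinned down, step (3) is pure finite-dimensional linear algebra with the explicit generators, and no further analytic input is needed.
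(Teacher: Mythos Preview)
Your proposal is correct and follows essentially the same approach as the paper. The paper eliminates $\Delta t_n$ via the constraint $\sum_j \Delta t_j = T$ to obtain a standard linear system with matrix $A$ whose columns are $(\mu^{e_i}-\mu^{e_n})$, $i=0,\dots,n-1$, and then argues via rank--nullity that $\Delta t_0$ is determined iff the first column is independent of the rest, i.e.\ iff $\mathrm{rank}(A)=\mathrm{rank}(A')+1$, which is exactly $D(h)=D(h^{(1)})+1$; you keep all $n+1$ coordinates and work with the direction space $V$ of the affine solution set, reaching the same dichotomy through the affine-hull criterion $\mu^{e_0}\in\mathrm{aff}\{\mu^{e_1},\dots,\mu^{e_n}\}$. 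The two arguments are reformulations of one another. Your step (2), making explicit that constancy of a linear functional on the open feasible polytope is equivalent to constancy on its affine hull (because an interior feasible point exists), is a point the paper handles only implicitly, so your write-up is slightly more careful there.
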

\begin{proof} 
The constrained linear problem ${\ref{eq:pred}}$ is rewritten as
\begin{eqnarray}\label{eq:pred:reformulated} &&
  \sum_{i=0}^{n-1} (\mu_i -\mu_n) \Delta t_i = (\ell - L_0-\beta ^{\overline{0:n}} -\mu_n T)
 , \; \Delta t_i >0 ,\sum_{i=0}^{n-1} \Delta t_i < T 
\end{eqnarray}
Consider the constrained linear systems	
\begin{eqnarray*}   (A):  && 
  \sum_{i=0}^{n-1} A_{ji} t_i = y_j, \; 1\le j\le m ,\; t\in C\subseteq \mathbb R^n
\\  (A'): &&
  \sum_{i=1}^{n-1} A_{ji} t_i' =y'_j, \; 1\le j \le m, \; t' \in C'\subseteq \mathbb R^{n-1}
\end{eqnarray*}
where $C$ and $C'$ are open simplexes (cf. ${\ref{eq:pred}}$) .

$(A)$ corresponds to ({\ref{eq:pred:reformulated}}) and
 $(A')$ corresponds to the situation after 
the first transition $e_0 \to e_1$.
Denote 
\begin{eqnarray*} 
A':=\bigl(A_{ji} \bigr)_{ 1\le i \le (n-1), 1\le j \le m} 
\end{eqnarray*}
The images  $AC$ and $A'C'$ are open in $\mathbb R^m$,
and their dimension coincides with the ranks
$\dim( {\rm Im}( A) )=D(h)$ and $\dim( {\rm Im}(A'))=D(h^{(1)})$ respectively.

We have the  linear isomorphisms
 \begin{eqnarray*} 
  { \rm Im} ( A ) \simeq  \bigl( \mathbb R^n \big / {\rm Ker}( A ) \bigr), \quad  {\rm Im} (A' ) \simeq \bigl(
\mathbb R^{n-1} \big / {\rm Ker}( A' ) \bigr)
 \end{eqnarray*}
where ${\rm Ker}(A)$ denotes the null space  and we take the algebraic quotient. This implies
\begin{eqnarray*} 
 \dim ( {\rm Im} (A ) ) = n - \dim({\rm Ker}(A) ), \quad \dim ( {\rm  Im} (A' ) ) = n-1 - \dim({\rm Ker}(A') )
\end{eqnarray*}
Either
\begin{enumerate}
 \item 
the column vector $A_{\bullet 0}$ is linearly independent from
the columns $( A_{\bullet 1}, \dots ,A_{\bullet n-1})$
\begin{eqnarray*} 
 \Longleftrightarrow \dim({\rm Im}(A))=\dim ( { \rm Im} (A' ) )+1 
 \Longleftrightarrow  \dim({\rm  Ker}(A))=\dim( {\rm Ker}(A')) ,
\end{eqnarray*}
\item
 or $\dim({\rm Im}(A))=\dim ( {\rm Im} (A' ) )$,
\begin{eqnarray*}
\Longleftrightarrow\dim( {\rm Ker}(A))=\dim( \mbox{\rm Ker}(A'))+1 .
\end{eqnarray*}
\end{enumerate}
In  case (1),
the dimension of the null space does not change after adding the  column
 $A_{\bullet 0}$  to the matrix $A'$.
If $( t_0, t_1, \dots,  t_{n-1})$ is a solution of the homogeneous system associated to $ (A)$,
\begin{eqnarray*}  
 (A^*): && \sum_{i=0}^{n-1} A_{ji} t_i = 0 , \; 1\le j \le m
\end{eqnarray*}
necessarily $t_0 =0$ and $ \sum_{i=1}^{n-1} A_{ji}  t_i =0, \; \forall j$.
This means that all solutions of $(A)$ begin with the same coordinate $ t_0$.

In case (2) the homogeneous system $(A^*)$ admits solutions with $ t_0\ne 0$ 
and $ t_0$ is not uniquely determined by $(A)$.
\end{proof}

\begin{example}
Consider  the example \ref{ex:k-h-a}, fix $h$ and take $\ell \in \mathcal A_T(h) $. It is easy to see, that $(T , \ell , h) $ never 
determines $\Delta t_0 $. On the other hand, in the example \ref{ex:two-d}, take $h= (1; e^{1}, e^{2})$, and $\ell \in \mathcal A_T(h) $.
Then $(T, \ell , h) $ determines $\Delta t_0 $ and we have 
$$
\Delta t_0 = \frac{\ell  -L_0 - \beta ^+ - \mu ^+ T }{\mu ^- -\mu ^+ }.
$$
More generally, in this example, for any $h$ with $n$ changes in the economy, and $\ell \in  \mathcal A _T( h) $, the last jump time 
$\sum _{k=0} ^{n-1} \Delta t_k $ is known, if we know the value of $\sum _{k=0}^{n-2}\Delta t_k $. 
\end{example}

\section{Computation of the insiders compensator}

Our program has two parts:
i) Obtain information about the compensator of $N$ with respect to the filtration  $\mathbb G $.
ii) Check the (NA) criteria in the enlarged filtration $\mathbb G$.


\medskip

The idea is to enlarge the filtration $\mathbb G$ with the information of the random scenario $H_T$,
and then use filtration shrinkage to obtain the compensator with respect to  $\mathbb G$.  


\subsection{Classification of the jump times in an extended filtration }
We work with the filtration $\mathbb G ^{\mathbb H} $, where 
$G_t^{\mathbb H} = \bigcap _{u> t}G_u \vee \sigma (H_T) $.

The following proposition a summary of the results in the previous section. 

\begin{proposition} Consider the $k$-th jump time $\tau_k$. Fix an history $h$.
 on the set  $\{ \omega: H_T(\omega)=h\}$,
\begin{itemize}

\item[(a)]
either  $D(h^{(k-1)}_T)=D(h^{(k) }_T)$, so that  $\forall s \in (\tau_{k-1}, \tau_k]$,
\begin{eqnarray*} 
 Q_{e_k,T-s}(d\ell -(\mu^{e_{k-1}}-r^{e_{k-1}} )(T- s)- \beta^{e_{k-1},e_k},
h^{(k)})  \ll  Q_{e_{k-1},T-s} (d\ell, h^{(k-1)} ) 
\end{eqnarray*} 
with
\begin{eqnarray*}  &&
{\bf 1}( \tau_{k-1}< s \le \tau_k ) \Lambda^{ L_T,h}(ds)   = \\ &&
{\bf 1}(  \tau_{k-1}< s \le \tau_k )
 q_{(e_k,T-s,h)}(L_T - L_{s-}-\beta^{e_{k-1}e_k})
\lambda^{e_{k-1},e_k}  ds \; ,
\end{eqnarray*} 
where 
\begin{eqnarray*}
q_{(e_k,T-s,h)} (\ell ) := \frac{  d Q_{e_k,T-s}(\;\cdot -(\mu^{e_{k-1}}-r^{e_{k-1}} )(T- s)- \beta^{e_{k-1},e_k},
h^{(k)})}{ d Q_{\;e_{k-1},T-s} ( \cdot ,h^{(k-1)}  ) }(\ell ) 
\end{eqnarray*}
is supported by the random interval $( \underline \tau_k(h) ,\overline \tau_k(h) ]$.
where
 by using (\ref{GH_predictable_times}) we define the
$\mathbb G $-predictable times 
\begin{eqnarray*}
\underline\tau_k(h) & := &
  \tau_{k-1}+ \underline
{\mathcal T } ( T-\tau _{k-1}, L_T - L_{\tau _{k-1}},  h^{(k-1)}_T ),
\\
\overline\tau_k(h) & := &
  \tau_{k-1}+ \overline
{\mathcal T } ( T-\tau _{k-1}, L_T - L_{\tau _{k-1}},  h^{(k-1)}_T ),
\end{eqnarray*}
which are also $\bigl( \mathbb G_{\tau_{(k-1)}}\bigr)$-measurable and
satisfy 
 \begin{eqnarray*}
\tau_{k-1}\le \underline \tau_k(h) \le \tau_k \le \overline\tau_k(h) 
\quad \mbox{ on }\{ \omega: H_T(\omega)=h\}
\end{eqnarray*}

 \item[(b)] 
or 
 $D(h^{(k-1)}_T)=D(h^{(k) }_T)+1$, so that  $\tau_k=\underline \tau_k(h)=
\overline \tau_k(h) $ .
\end{itemize}
\medskip

When we sum over all scenarios  $h$ we obtain  
\begin{itemize}
 \item[(A)] : when $D( H_T^{(k-1)} )=D( H_T^{(k) })$, $\tau_k$ has 
$\mathbb G^{\mathbb H}$-compensator
absolutely continuous w.r.t. $\Lambda$ and
  \begin{eqnarray*}
\tau_{k-1}\le \underline \tau_k(H_T ) \le \tau_k
 \le \overline\tau_k(H_T ) 
\end{eqnarray*}

\item[(B)] : otherwise  $\tau_k=\underline   \tau_k(H_T) = \overline \tau_k(H_T)$ 
\end{itemize}
where $\underline \tau_k(H_T ),\overline \tau_k(H_T )$ 
are   $\mathbb G^{\mathbb H}$-predictable times.


\end{proposition}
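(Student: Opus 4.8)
The plan is to obtain the proposition as a consequence of Theorem~\ref{compensator}, the filtration shrinkage Lemma~\ref{countable-shrinkage} and the support/dimension analysis of the previous section, localised on the stochastic interval $(\tau_{k-1},\tau_k]$ and on each scenario set $\{H_T=h\}$. First I would apply Theorem~\ref{compensator} to the counting process $N^{e_{k-1},e_k}$ --- whose jump at $\tau_k$ is, on $\{H_T=h\}$, precisely the $k$-th change of the economy --- with enlarging variable $\vartheta=(L_T,H_T)$, which is $F^Y_T$-measurable and takes values in $\mathbb{R}^m\times\Xi$ with $\Xi$ countable, and with continuous $\mathbb{F}^Y$-compensator $\Lambda^{e_{k-1},e_k}_s=\int_0^s\lambda^{e_{k-1},e_k}\mathbf{1}(Y_{u-}=e_{k-1})\,du$, so that Assumption~1 is in force. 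On $\{H_T=h\}$ and for $s\in(\tau_{k-1},\tau_k]$ the economy sits in $e_{k-1}$, the observed scenario is $h_{(k-1)}$, and the only admissible transition is $e_{k-1}\to e_k$; so by Remark~\ref{remark3} the $\mathbb{G}^{\mathbb{H}}$-predictable density $Z$ of Theorem~\ref{compensator} is the Radon--Nikodym derivative of the post-jump predictive law of $L_T$, namely $Q_{e_k,T-s}(\,\cdot-L_{s-}-\beta^{e_{k-1},e_k},h^{(k)})$ (up to the affine translation $(\mu^{e_{k-1}}-r^{e_{k-1}})(T-s)$ recorded in the statement, which one reads off the recursion (\ref{recursive_measure})), with respect to the pre-jump one $Q_{e_{k-1},T-s}(\,\cdot-L_{s-},h^{(k-1)})$. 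Multiplying by $d\Lambda^{e_{k-1},e_k}_s=\lambda^{e_{k-1},e_k}\,ds$ on this interval yields, on $(\tau_{k-1},\tau_k]\cap\{H_T=h\}$,
\begin{eqnarray*}
\Lambda^{L_T,h}(ds)=q_{(e_k,T-s,h)}\bigl(L_T-L_{s-}-\beta^{e_{k-1},e_k}\bigr)\,\lambda^{e_{k-1},e_k}\,ds ,
\end{eqnarray*}
provided the translated post-jump law is absolutely continuous with respect to the pre-jump law.

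Next I would settle the dichotomy (a)/(b) with the support analysis. Each predictive law $Q_{e_{k-1},T-s}(\,\cdot,h^{(k-1)})$, $Q_{e_k,T-s}(\,\cdot,h^{(k)})$ is either a point mass or equivalent to Lebesgue measure on its support, the convex hull $\mathcal A_{T-s}$ of the corresponding scenario, and by the nesting (\ref{eq:d-k}) the translated support of the $h^{(k)}$-law lies inside that of the $h^{(k-1)}$-law; hence the absolute continuity above holds if and only if these two convex hulls have equal dimension, i.e. $D(h^{(k)})=D(h^{(k-1)})$, and --- since a single transition lowers the dimension by $0$ or $1$, as in the proof of (\ref{eq:char}) --- the only alternative is $D(h^{(k-1)})=D(h^{(k)})+1$. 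When $D(h^{(k)})=D(h^{(k-1)})$ the density $q_{(e_k,T-s,h)}$ is finite and the displayed $\mathbb{G}^{\mathbb{H}}$-compensator is absolutely continuous in $ds$; its support in $s$ is the set of times for which $L_T-L_{s-}$ lies in the translated support of the post-jump law, which via (\ref{eq:pred}) is exactly $\{s:\ s-\tau_{k-1}\in C_0(h^{(k-1)}_T)\}$ for the residual data $(T-\tau_{k-1},L_T-L_{\tau_{k-1}})$, i.e. by (\ref{GH_predictable_times}) the interval $(\underline\tau_k(h),\overline\tau_k(h)]$; since the actual jump $\tau_k$ lies in this support, $\tau_{k-1}\le\underline\tau_k(h)\le\tau_k\le\overline\tau_k(h)$. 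This is case~(a). When instead $D(h^{(k-1)})=D(h^{(k)})+1$, the characterisation (\ref{eq:char}) applied to the residual scenario $h^{(k-1)}_T$ says that the residual first jump time is \emph{determined} by $(T-\tau_{k-1},L_T-L_{\tau_{k-1}},h^{(k-1)}_T)$, so $\underline\tau_k(h)=\overline\tau_k(h)$ and this common value can only be $\tau_k$; the absolutely continuous part of the compensator then vanishes, which is the purely discontinuous (singular) branch of the Lebesgue decomposition following Proposition~\ref{singular:part}. This is case~(b).

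It remains to check that the endpoints are $\mathbb{G}^{\mathbb{H}}$-predictable times and to sum over $h$. On $\{H_T=h\}$ the times $\underline\tau_k(h),\overline\tau_k(h)$ are functions of $\tau_{k-1}$, of the left limit $L_{\tau_{k-1}-}$, of $\Delta L_{\tau_{k-1}}=\beta^{e_{k-2},e_{k-1}}$ (a constant fixed by $h$) and of $L_T$; since $\tau_{k-1}$ and $L_{\tau_{k-1}-}$ are $G^{\mathbb{H}}_{\tau_{k-1}-}$-measurable while $(L_T,H_T)\in G^{\mathbb{H}}_0\subseteq G^{\mathbb{H}}_{\tau_{k-1}-}$, they are $G^{\mathbb{H}}_{\tau_{k-1}-}$-measurable and dominate $\tau_{k-1}$; arguing inductively in $k$ (the base $k=1$ being immediate, as they are then $G^{\mathbb{H}}_0$-measurable) one concludes they are $\mathbb{G}^{\mathbb{H}}$-predictable times --- equivalently, they are read off the $\mathbb{G}^{\mathbb{H}}$-predictable density $Z$ of Theorem~\ref{compensator}. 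Finally, since $H_T$ takes only countably many values, summing the per-scenario identities over $h\in\Xi$ --- the partition $(\{H_T=h\})_{h\in\Xi}$ being the mechanism of Lemma~\ref{countable-shrinkage} --- gives: on $\{D(H_T^{(k-1)})=D(H_T^{(k)})\}=\bigcup_{h:\,D(h^{(k-1)})=D(h^{(k)})}\{H_T=h\}$ the $\mathbb{G}^{\mathbb{H}}$-compensator of $\tau_k$ is absolutely continuous with respect to $\Lambda$ and $\tau_{k-1}\le\underline\tau_k(H_T)\le\tau_k\le\overline\tau_k(H_T)$, which is~(A); on the complement $\tau_k=\underline\tau_k(H_T)=\overline\tau_k(H_T)$ is a $\mathbb{G}^{\mathbb{H}}$-predictable time, which is~(B). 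This is exactly the split of the jumps of $N$ into a $\mathbb{G}^{\mathbb{H}}$-totally inaccessible part (absolutely continuous compensator, case A) and a $\mathbb{G}^{\mathbb{H}}$-accessible --- here even predictable --- part (case B) of Proposition~\ref{singular:part}.

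The main obstacle will be the two genuinely technical points hidden in the first and third paragraphs: pinning down the exact affine translations in the definition of $q_{(e_k,T-s,h)}$ when identifying $Z$ on $(\tau_{k-1},\tau_k]$ via the recursion (\ref{recursive_measure}), and, more substantially, upgrading the $G^{\mathbb{H}}_{\tau_{k-1}-}$-measurability of $\underline\tau_k(h),\overline\tau_k(h)$ to genuine $\mathbb{G}^{\mathbb{H}}$-predictability of these random times --- this is where the inductive structure over the successive jump times, and the $G^{\mathbb{H}}_0$-measurability of $(L_T,H_T)$ in the initially enlarged filtration, are essential. Everything else is bookkeeping with the support and dimension results already established.
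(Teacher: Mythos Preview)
The paper does not give an explicit proof of this proposition: it is introduced as ``a summary of the results in the previous section'' and stated without any further argument. Your proposal is therefore not competing with an existing proof but supplying the details the authors leave implicit, and you do so using exactly the machinery they have set up --- Theorem~\ref{compensator} and Remark~\ref{remark3} to identify the $\mathbb{G}^{\mathbb{H}}$-compensator as a Radon--Nikodym derivative of predictive laws, the recursion~(\ref{recursive_measure}) to read off the translations, the support/dimension analysis culminating in~(\ref{eq:char}) for the dichotomy, and the $G_0^{\mathbb{H}}$-measurability of $(L_T,H_T)$ for predictability of $\underline\tau_k,\overline\tau_k$. This is the intended route.

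One small remark: your invocation of Lemma~\ref{countable-shrinkage} for the passage from (a)/(b) to (A)/(B) is not quite apt. In the filtration $\mathbb{G}^{\mathbb{H}}$ the random scenario $H_T$ is already $G_0^{\mathbb{H}}$-measurable, so summing over $\{H_T=h\}$ is a direct partition argument, not a filtration shrinkage. Lemma~\ref{countable-shrinkage} enters only in the \emph{next} proposition, where one projects the $\mathbb{G}^{\mathbb{H}}$-compensator down to $\mathbb{G}$. This does not affect the correctness of your argument, only the labelling of the step.
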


\begin{corollary} 
 $\Lambda ^{L_T, \mathbb {H}}$ is absolutely continuous
w.r.t. $\Lambda$  if and only if $D( H_T) =0 $.
\end{corollary}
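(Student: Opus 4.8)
The plan is to derive the corollary directly from the preceding Proposition by summing the scenario-wise descriptions of the jump-time compensators. Recall that $\Lambda^{L_T,\mathbb H}$ is assembled from the pieces ${\bf 1}(\tau_{k-1}<s\le\tau_k)\Lambda^{L_T,h}(ds)$ over all $k\ge 1$ and all scenarios $h$, and the Proposition tells us that on $\{H_T=h\}$ the $k$-th piece is absolutely continuous with respect to $\Lambda$ precisely when $D(h^{(k-1)})=D(h^{(k)})$ (case (a)/(A)), and is carried by a single $\mathbb G^{\mathbb H}$-predictable time $\tau_k=\underline\tau_k(h)=\overline\tau_k(h)$ — hence has a purely atomic, $\Lambda$-singular contribution — when $D(h^{(k-1)})=D(h^{(k)})+1$ (case (b)/(B)). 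So the task reduces to checking: $D(H_T)=0$ if and only if \emph{every} jump time of every scenario of positive probability falls in case (a).

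First I would prove the easy direction. If $D(H_T)=0$, then by the remark following \eqref{eq:d-k} (``if $D(h)=0$, then $D(h)=D(h^{(1)})=\cdots=D(h^{(n)})=0$''), for each scenario $h$ with $\mathbb P(H_T=h)>0$ we have $D(h^{(k-1)})=D(h^{(k)})=0$ for all $k$, so every jump time is in case (a). Summing over scenarios as in the Proposition, each contribution to $\Lambda^{L_T,\mathbb H}$ is absolutely continuous with respect to $\Lambda$, and a countable sum of measures absolutely continuous with respect to $\Lambda$ is again absolutely continuous with respect to $\Lambda$; hence $\Lambda^{L_T,\mathbb H}\ll\Lambda$.

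For the converse I would argue by contraposition: suppose $D(H_T)\ge 1$ on a set of positive probability, i.e. there is a scenario $h=(n;e^{0:n})$ with $\mathbb P(H_T=h)>0$ and $D(h)\ge 1$. By \eqref{eq:d-k} the chain $D(h)\ge D(h^{(1)})\ge\cdots\ge D(h^{(n)})=0$ is non-increasing and drops from a value $\ge 1$ to $0$, so there exists at least one index $k$ with $D(h^{(k-1)})=D(h^{(k)})+1$, i.e. a jump time $\tau_k$ in case (b): on $\{H_T=h\}$ one has $\tau_k=\underline\tau_k(h)=\overline\tau_k(h)$, a $\mathbb G^{\mathbb H}$-predictable time, so $\tau_k$ is an accessible (indeed predictable) jump of $N$ in $\mathbb G^{\mathbb H}$ and its compensator has an atom at a predictable time. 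Since $\Lambda$ is continuous and $\mathbb P(H_T=h)>0$, this atomic part of $\Lambda^{L_T,\mathbb H}$ is singular with respect to $\Lambda$ on a set of positive probability, so $\Lambda^{L_T,\mathbb H}\not\ll\Lambda$. The main obstacle — really a bookkeeping point rather than a deep one — is to make precise that the singular (atomic) contributions coming from different $(k,h)$ in case (b) cannot cancel against or be absorbed by the absolutely continuous contributions from case (a), and that summing over the countably many scenarios preserves both the absolute-continuity statement in one direction and the presence of a genuine $\Lambda$-singular atom in the other; this follows because the atoms sit at $\mathbb G^{\mathbb H}$-predictable times while $\Lambda$ charges no time with positive probability, so the Lebesgue decomposition of $\Lambda^{L_T,\mathbb H}$ separates them cleanly, exactly as described in Proposition \ref{singular:part}.
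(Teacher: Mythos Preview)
Your proposal is correct and follows exactly the approach the paper intends: the corollary is stated in the paper without proof, as an immediate consequence of the preceding Proposition together with the monotone chain \eqref{eq:d-k}, and your argument spells out precisely that derivation. The bookkeeping point you flag about the atomic contributions not being absorbed by the absolutely continuous part is handled just as you say, via the continuity of $\Lambda$ and the predictability of the times $\overline\tau_k(h)$, in line with Proposition~\ref{singular:part}.
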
 

\vspace{15pt}
Now, we apply the countable filtration shrinkage argument to the
$\mathbb G ^{\mathbb H} $- compensator of $N$ to
obtain the $\mathbb G $-compensator.

\begin{proposition} \label{G^L-insider-compensator}  The  $\mathbb G $-compensator of $\tau_k$ is given by
\begin{eqnarray*}   && 
 \int_0^t {\bf 1}( \tau_{k-1} < s \le \tau_k) \Lambda^L(ds)=
\\ &&
 \sum_{h\in {\mathcal D}_k}\int_0^t \mathbb{P}
( H_T=h \big\vert L_T, F_{s-} ) {\bf 1 }( \tau_{k-1} < s \le \tau_k ) d_s \bigl( {\bf 1 }( 
 \overline \tau_k(h) \le  s) \bigr)
 + \\ &&
 + \sum_{h\in \Xi \setminus  {\mathcal D}_k }  \int_0^t 
\mathbb{P}( H_T=h \big\vert L_T, F_{s-} ) 
{\bf 1 }( \tau_{k-1} < s \le \tau_k )
\Lambda^{L,h}(ds)
\end{eqnarray*} 
where 
\begin{eqnarray*}   
 {\mathcal D}_k & =& \bigl\{ h \in \Xi : D(h^{(k-1)} ) = D( h^{(k) })+1 \bigr\},
\end{eqnarray*} 
is the set of scenarios for which $\tau_{k}$ is determined by $L_T$ and $H_T$ at time $\tau_{k-1}$,
and $\Lambda^{L,h}(ds) \ll \Lambda(ds)$ for $s \in (\tau_{k-1} , \tau_k]$
 and $h\in  \Xi \setminus  {\mathcal D}_k$.

This gives the decomposition of $\tau_k$ into
 ${\mathbb G}$-accessible and ${\mathbb G}$-totally inaccessible parts.

\end{proposition}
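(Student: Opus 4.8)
The plan is to obtain the $\mathbb G$-compensator of $\tau_k$ from the $\mathbb G^{\mathbb H}$-compensator by the countable filtration shrinkage formula of Lemma \ref{countable-shrinkage}, fed with the case analysis of the preceding Proposition. Concretely, I would take $\vartheta = L_T$ and $\zeta = H_T$; since $H_T$ takes values in the countable set $\Xi$ (and $\mathbb P(H_T=h)>0$ exactly for $h\in\Xi_{Y_0}$), Lemma \ref{countable-shrinkage} applies. Inserting the $\mathbb G$-predictable (hence $\mathbb G^{\mathbb H}$-predictable) indicator ${\bf 1}(\tau_{k-1}<s\le\tau_k)$ into the shrinkage identity and using uniqueness of the dual predictable projection gives
\begin{eqnarray*}
\int_0^t {\bf 1}(\tau_{k-1} < s \le \tau_k)\,\Lambda^L(ds) = \sum_{h} \int_0^t \mathbb{P}(H_T = h \mid L_T, F_{s-})\,{\bf 1}(\tau_{k-1} < s \le \tau_k)\,\Lambda^{L,h}(ds),
\end{eqnarray*}
the sum over $h\in\Xi_{Y_0}$, where $\Lambda^{L,h}$ is the $\mathbb G^{\mathbb H}$-compensator of $N$ on the set $\{H_T=h\}$.

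Next I would substitute, for each $h$, the restriction of $\Lambda^{L,h}$ to $(\tau_{k-1},\tau_k]$ supplied by the preceding Proposition. For $h\notin{\mathcal D}_k$ we are in case (a): $D(h^{(k-1)})=D(h^{(k)})$ and ${\bf 1}(\tau_{k-1}<s\le\tau_k)\Lambda^{L,h}(ds)\ll\Lambda(ds)$, with the explicit Radon--Nikodym density $q_{(e_k,T-s,h)}(L_T-L_{s-}-\beta^{e_{k-1}e_k})\lambda^{e_{k-1}e_k}$. For $h\in{\mathcal D}_k$ we are in case (b): $D(h^{(k-1)})=D(h^{(k)})+1$, so on $\{H_T=h\}$ one has $\tau_k=\underline\tau_k(h)=\overline\tau_k(h)$; thus $\tau_k$ coincides there with the $\mathbb G^{\mathbb H}$-predictable time $\overline\tau_k(h)$ and the $\mathbb G^{\mathbb H}$-dual predictable projection of ${\bf 1}(\tau_{k-1}<s\le\tau_k)$ reduces to the single atom $d_s\bigl({\bf 1}(\overline\tau_k(h)\le s)\bigr)$. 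Splitting the sum over $h\in{\mathcal D}_k$ and $h\in\Xi\setminus{\mathcal D}_k$ and plugging in these two forms reproduces verbatim the two-term expression in the statement; since each $\Lambda^{L,h}$ with $h\notin{\mathcal D}_k$ is $\ll\Lambda$, the stated absolute continuity of the second sum is immediate.

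It then remains to identify the two terms with the $\mathbb G$-accessible and $\mathbb G$-totally inaccessible parts of $\tau_k$ (\cite[Chapter IV]{Chinese}). The second sum is a bounded-predictably-weighted countable sum of measures each $\ll\Lambda$, and $\Lambda$ is continuous by Assumption 1, so this term is continuous; a stopping time with continuous compensator is totally inaccessible, so the second sum is the compensator of the $\mathbb G$-totally inaccessible part of $N$ on $(\tau_{k-1},\tau_k]$. For the first sum I would check that each $\overline\tau_k(h)$, $h\in{\mathcal D}_k$, induces a $\mathbb G$-predictable time: by (\ref{GH_predictable_times}) it equals $\tau_{k-1}+\overline{\mathcal T}(T-\tau_{k-1},L_T-L_{\tau_{k-1}},h^{(k-1)})$ with $h$ now a \emph{fixed} deterministic scenario, hence it is $G_{\tau_{k-1}}$-measurable — here one uses crucially that $L_T=\vartheta\in G_0\subseteq G_{\tau_{k-1}}$, while $\tau_{k-1}$ is a $\mathbb G$-stopping time and $L_{\tau_{k-1}}\in G_{\tau_{k-1}}$ — and on $\{H_T=h\}$ it satisfies $\overline\tau_k(h)\ge\tau_k>\tau_{k-1}$, so it is announced by $\tau_{k-1}+(1-1/n)(\overline\tau_k(h)-\tau_{k-1})$. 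As ${\mathcal D}_k$ is countable, the graph of $\tau_k$ restricted to $\{H_T\in{\mathcal D}_k\}$ is covered by countably many $\mathbb G$-predictable graphs, so the first sum is purely discontinuous with jumps at $\mathbb G$-predictable times, i.e. it is the compensator of the $\mathbb G$-accessible part. Uniqueness of the compensator and of the accessible/totally inaccessible decomposition finishes the proof.

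The step I expect to be the main obstacle is precisely this last identification: passing the $\mathbb G^{\mathbb H}$-predictable weights ${}^{\mathbb G^{\mathbb H},p}({\bf 1}_{\{H_T=h\}})$ correctly through the shrinkage formula so that the case-(b) contribution is genuinely the atom $d_s({\bf 1}(\overline\tau_k(h)\le s))$ (with no mass leaking off $\{H_T=h\}$), and then verifying that the announcing sequence for $\overline\tau_k(h)$ is adapted to $\mathbb G$ rather than only to $\mathbb G^{\mathbb H}$ — it is here that the observation $L_T=\vartheta\in G_0$ does the essential work, since $h^{(k-1)}$ is frozen to a constant and the remaining data of $\overline\tau_k(h)$ are $\mathbb G$-adapted.
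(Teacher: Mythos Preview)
Your proposal is correct and follows exactly the route the paper intends: the paper does not give a formal proof of this proposition but simply says ``we apply the countable filtration shrinkage argument to the $\mathbb G^{\mathbb H}$-compensator of $N$ to obtain the $\mathbb G$-compensator,'' i.e.\ Lemma~\ref{countable-shrinkage} with $\zeta=H_T$ combined with the case split (a)/(b) of the preceding proposition. Your write-up is in fact more detailed than the paper's, since you spell out why the $\overline\tau_k(h)$ are $\mathbb G$-predictable (the paper records this in the preceding proposition as a side remark) and why the continuous part corresponds to the $\mathbb G$-totally inaccessible portion; the paper leaves all of this to the reader.
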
 


Note also  that the predictable times $\{ \overline \tau_k(h) : h \in {\mathcal D}_k\}$ are not
necessarily distinct. 
Let $ {\mathcal D}_k^* \subseteq {\mathcal D}_k$ a choice of distinct representatives
w.r.t. the equivalence relation 
\begin{eqnarray*} 
 h \stackrel{k}{ \sim }h' \Longleftrightarrow \overline  \tau_k(h) =  \overline \tau_k(h'), \; h,h'\in   {\mathcal D}_k
\end{eqnarray*}

By re-summation, the compensator
  $\mathbb{G}$-accessible  part of the stopping time $\tau_k$  is rewritten as 
\begin{eqnarray*}
\sum_{h\in {\mathcal D}_k^*}\int_0^t 
\biggl\{ 
\sum_{h'\in {\mathcal D}_k:  \overline\tau_k(h') =  \overline \tau_k(h) } 
\mathbb{P}( H_T=h' \big\vert L_T, F_{s-} ) \biggr\} d_s \bigl( {\bf 1 }( \overline  \tau_k(h) \le  s) \bigr)
 \end{eqnarray*}
where the ${\mathbb G}$-predictable jump times $\{ \overline \tau_k(h) : h \in D_k^* \}$ are distinct.

\begin{example}
 Concerning the example \ref{ex:two} there are two possibilities. The final value $L_T=\ell$ does not uniquely 
determine the scenario $H_T= h$. In this case the compensator is totally inaccessible in the filtration 
$\mathbb G$. But with special parameter values $\mu^\pm $ and $\beta ^\pm $ $H_T$ is uniquely determined by 
$L_T= \ell $, and then for the insider the last jump is predictable. 
\end{example}

\section{ Insider's  Free lunch with vanishing risk } 
From the general theory it follows that the property
{\it No free lunch with vanishing risk }(NFLVR) in the insider filtration $\mathbb{G}$
is equivalent to the existence of a measure $\mathbb{Q}^L\sim \mathbb{P}$  
under which the discounted stock process $(\widetilde S_t)_{t\geq 0}$ is 
 a $\mathbb{F}^\vartheta$-martingale. This leads to  conditions concerning 
the accessible and totally inaccessible parts of the jumps  of $(L_t)_{t\geq 0}$. 
We also see  that, for arbitrage considerations, we do not need to fully compute the compensators in
the insider filtration:  it is enough to compute the random sets ${\mathcal D}_{k}$ at each 
jump time  $\tau_{k-1}$.
\\

For ${\mathcal A}\subseteq \mathcal{Y}^e $ we consider the system of equations
\begin{equation}
\label{matrix system2}
\mathbf{\Gamma}^{e,\mathcal A} \widetilde{\mathbf{\lambda}}^{e,\mathcal A}=-\mathbf{\mu}^e ,
\end{equation}
and the homogeneous system
\begin{equation}
\label{matrix system3}
\mathbf{\Gamma}^{e,\mathcal A} \widetilde{\mathbf{\lambda}}^{e,\mathcal A} = \bf{ 0 }
\end{equation}

where  $\mu^e=\left( \mu^{ie}\right)_{i=1;\ldots,m}$,
\begin{equation}
\mathbf{\Gamma}^{e,\mathcal A} :=(
{\mathbf{\gamma} }^{ief} : i=1,\dots,m, f\in {\mathcal A})
\end{equation}
\begin{equation}
\widetilde{\mathbf{\lambda}}^{e,{\mathcal A}} =
( \lambda^{e,f} :  f\in {\mathcal A} )
\end{equation}
with the constraints 
\begin{equation}
\widetilde \lambda^{e,f} > 0   \mbox{ strictly  for }    f\in {\mathcal A}
\end{equation}
This means that respectively $(-\mathbf{\mu}^e)$ 
and $\bf { 0 } $ are in the interior of the the convex cone generated by the
columns  of the matrix $\mathbf{\Gamma}^{e,\mathcal A}$.
\\

After the  $(k-1)$-th jump time, let $Y_{\tau_{k-1}}(\omega)= e $, and define
\begin{eqnarray}
\Xi_k(i,j) & := &\{ h=(n;e_0,\dots,e_n): n\ge k , e_{k-1}=i, e_k=j \}\subseteq \Xi
\\
\hat{\mathcal Y}^{(e)}_k(\omega) &:=&\biggl \{ f: \exists h\in
(\Xi_k(e,f)\setminus \mathcal{ D}_k) 
\mbox{ and }h^{(k-1)}= H^{(k-1)}(\omega)\biggr \}
\end{eqnarray}

Similarly, for a  $\mathbb{G}$- predictable time $\overline \tau_k(h)=\underline \tau_k(h)  ,h\in{\mathcal  D}_k$ we define 
\begin{eqnarray*} 
\check{\mathcal Y}^{(e)}_{k,h}   (\omega):  = \biggl \{ f: 
\exists h' \in \Xi_k(e,f) \cap 
\mathcal{ D}_k   \mbox{ with  } \overline \tau_k(h) = \underline \tau_k(h)= \underline \tau_k(h')   = \overline \tau_k(h')
\mbox{ and }  h^{(k-1)}=H^{(k-1)}(\omega) \biggr \}
\end{eqnarray*}

$ \hat{\mathcal Y}^{(Y_{\tau_{k-1}})}_k(\omega)  $ is the set of states reachable
by one $\mathbb {G}$-totally inaccessible jump after time $\tau_{k-1}$.
\\

$\check{\mathcal Y}^{(Y_{\tau_{k-1}})}_{k,h}$
is the set of states reachable after time $\tau_{k-1}$ by one accessible transition
at the  $\mathbb {G}$-predictable time  $ \overline \tau_k(h), h\in {\mathcal D}_k$.
\\

Note that these random sets are determined at time ${\tau_{k-1}}$ in the insider
filtration $\mathbb{G}$.
\\

\begin{theorem} \label{NFLVR}
NFLVR is equivalent to the following  condition:

\begin{enumerate} 
\item (Totally  inaccessible jump part  ):
$\mathbb{P}$-almost surely, for all $k$  the constrained linear system  (\ref{matrix system2}) 
with  ${\mathcal A}= \hat{\mathcal Y}^{(Y_{\tau_{k-1}})}_k(\omega)$  has strictly positive solutions.
\item (Accessible  jump part  ): $\mathbb{P}$-almost surely, for all $k$ and all
$\mathbb{G}$-predictable times  $ \overline \tau_k(h), h \in {\mathcal D}_k$,
the homogeneous constrained linear system (\ref{matrix system3})
with ${\mathcal A}=\check{\mathcal Y}^{(Y_{\tau_{k-1}})}_{k,h}(\omega)$ has strictly positive solutions.
\end{enumerate} 
\end{theorem}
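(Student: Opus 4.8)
The plan is to characterize NFLVR in the insider filtration $\mathbb G$ via the existence of an equivalent martingale measure $\mathbb Q^L \sim \mathbb P$ turning the discounted price $\widetilde S$ into a $\mathbb G$-local martingale, and then translate the drift condition into the two linear systems of the statement, piece by piece, at each jump time. First I would invoke the general fundamental theorem of asset pricing (NFLVR $\Leftrightarrow$ existence of an equivalent $\sigma$-martingale measure, which here reduces to a genuine martingale measure because the jumps are of finite activity and the drift is bounded on $[0,T]$), so that it suffices to decide when one can change the $\mathbb G$-intensities of the counting processes $N^{ek}$ — while keeping equivalence, i.e. keeping the same null intensities — so that $dL^i_t = \sum_e(\mu^{ie}-r^e)\mathbf 1(Y_{t-}=e)dt + \sum_{e,k}\beta^{iek}dN^{ek}_t$ has zero $\mathbb G$-drift. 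Using Proposition \ref{G^L-insider-compensator} and the remark following it, the $\mathbb G$-compensator of each $\tau_k$ splits into a totally inaccessible part, absolutely continuous w.r.t. $\Lambda$, and a purely discontinuous accessible part carried by the $\mathbb G$-predictable times $\overline\tau_k(h)$, $h\in\mathcal D_k$. An equivalent measure change can reweight the absolutely continuous part by any strictly positive $\mathbb G$-predictable density, and can reweight the mass that the accessible part puts at each predictable time $\overline\tau_k(h)$ by any strictly positive factor — but it cannot move mass between these two parts, nor create a new accessible atom, nor destroy one. Hence the martingale condition decouples across the totally inaccessible jumps and the accessible jumps, and within the accessible jumps it further decouples across the distinct predictable times $\overline\tau_k(h)$.

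Next I would write out the drift equation separately for each part. For the totally inaccessible jumps, immediately after $\tau_{k-1}$ with $Y_{\tau_{k-1}}=e$, the states that can be reached by a totally inaccessible jump are exactly $\hat{\mathcal Y}^{(e)}_k(\omega)$ by its definition via the scenarios $h\in\Xi_k(e,f)\setminus\mathcal D_k$ with $h^{(k-1)}=H^{(k-1)}(\omega)$; the absolutely continuous $\mathbb G$-drift of $\widetilde S$ during $(\tau_{k-1},\tau_k]$ is $\mu^e - \sum_{f\in\hat{\mathcal Y}^{(e)}_k}\gamma^{ief}\widetilde\lambda^{ef}$ up to the overall factor $d\Lambda$ (with $r^e=0$ by the WLOG reduction), so setting it to zero with strictly positive $\widetilde\lambda^{ef}$, $f\in\hat{\mathcal Y}^{(e)}_k$, is precisely solvability of the constrained system \eqref{matrix system2} with $\mathcal A=\hat{\mathcal Y}^{(e)}_k(\omega)$ — equivalently, $-\mu^e$ lies in the interior of the cone generated by the columns of $\mathbf\Gamma^{e,\hat{\mathcal Y}^{(e)}_k}$. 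For the accessible part at a predictable time $\overline\tau_k(h)$, the jump $\Delta\widetilde S$ there is already a $\mathbb G$-martingale increment once we center by its $\mathbb G_{\overline\tau_k(h)-}$-conditional expectation; since $\overline\tau_k(h)=\underline\tau_k(h)$ is $\mathbb G$-predictable and the reachable states are $\check{\mathcal Y}^{(e)}_{k,h}(\omega)$, the centering requires $\sum_{f\in\check{\mathcal Y}^{(e)}_{k,h}}\gamma^{ief}p^f = 0$ with $p^f>0$ the (re-weighted, strictly positive) conditional probabilities of jumping to $f$ — there is no $dt$-term contributing at a single predictable time — so this is exactly solvability of the homogeneous constrained system \eqref{matrix system3} with $\mathcal A=\check{\mathcal Y}^{(e)}_{k,h}(\omega)$, i.e. $\mathbf 0$ in the interior of the cone generated by the columns of $\mathbf\Gamma^{e,\check{\mathcal Y}^{(e)}_{k,h}}$. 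Conversely, if both systems are solvable $\mathbb P$-a.s. for all $k$ and all relevant predictable times, I would assemble the candidate density by piecing together, on each stochastic interval $(\tau_{k-1},\tau_k]$, the exponential-type density from the Norberg proposition built from the chosen $\widetilde\lambda$'s for the inaccessible part and the discrete reweighting at each $\overline\tau_k(h)$, check it is a strictly positive $\mathbb G$-martingale (finitely many jumps on $[0,T]$, bounded drift, so no integrability issue), and conclude the existence of $\mathbb Q^L\sim\mathbb P$.

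I would organize this as: (i) NFLVR $\Leftrightarrow$ existence of $\mathbb Q^L$; (ii) the decoupling of the drift condition along the accessible/inaccessible decomposition of Proposition \ref{G^L-insider-compensator}; (iii) the inaccessible-part equivalence with \eqref{matrix system2}; (iv) the accessible-part equivalence with \eqref{matrix system3}; (v) assembling a global equivalent martingale measure from local solutions. The main obstacle I expect is step (ii), the rigorous decoupling: one must argue that an equivalent measure change genuinely cannot transfer mass between the absolutely continuous and the singular (purely discontinuous, accessible) parts of the $\mathbb G$-compensator of $N$, and cannot alter which predictable times are charged, so that the martingale property must hold separately on each component. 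This is where Proposition \ref{singular:part}, the predictability of the times $\overline\tau_k(h)$, and the mutual orthogonality of the $M^{ek}$ (which passes to the enlargement in a suitable local form) do the real work; the linear-algebra reformulations in (iii)–(iv) are then essentially bookkeeping on the definitions of $\hat{\mathcal Y}^{(e)}_k$ and $\check{\mathcal Y}^{(e)}_{k,h}$, and the global assembly in (v) is routine given the finite activity on $[0,T]$.
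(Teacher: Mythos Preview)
Your proposal is essentially correct and covers both directions, but it takes a different route for necessity than the paper does. The paper's formal proof is a single sentence handling only sufficiency (``any choice of positive solutions \dots corresponds in the standard way to a $\mathbb{G}$-martingale measure''), which matches your step (v). For necessity, however, the paper does not argue via the structural rigidity of equivalent measure changes as you do in (ii); instead, after the corollary it constructs explicit free-lunch strategies using the separating hyperplane theorem. When system \eqref{matrix system2} fails at $\tau_{k-1}$ with $Y_{\tau_{k-1}}=e$, one finds $\xi\in\mathbb R^m$ with $\sum_i \xi_i\mu^{ie}>0$ and $\sum_i\xi_i\gamma^{ief}>0$ for all $f\in\hat{\mathcal Y}^{(e)}_k$, so holding the portfolio $\xi$ over any subinterval of $(\tau_{k-1},\tau_k]$ gives a strictly positive gain whether or not a jump occurs. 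When system \eqref{matrix system3} fails at a predictable time $\overline\tau_k(h)$, one finds $\xi$ with $\sum_i\xi_i\gamma^{ief}>0$ for all $f\in\check{\mathcal Y}^{(e)}_{k,h}$ and trades on $[\overline\tau_k(h)-\varepsilon,\overline\tau_k(h)]$; letting $\varepsilon\downarrow 0$ makes the drift contribution vanish, yielding FLVR.

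What each approach buys: your structural argument (equivalent changes of measure preserve the predictable/totally inaccessible classification of stopping times and cannot shift mass between the absolutely continuous and purely atomic parts of the $\mathbb G$-compensator) is cleaner conceptually and explains \emph{why} the two systems decouple, but it requires invoking or proving those invariance facts carefully, which is exactly the obstacle you flag in (ii). The paper's separating-hyperplane construction is more elementary and hands the reader an explicit strategy, at the price of being somewhat ad hoc and leaving the sufficiency direction almost entirely to the reader. Your outline would in fact make the paper's proof considerably more complete; just be aware that what the authors actually wrote is the direct arbitrage construction, not the measure-theoretic decoupling you describe.
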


\begin{proof}
Any choice of positive solutions of the linear systems (\ref{matrix system2}) and 
(\ref{matrix system3}), for all $k$ 
and all $h\in {\mathcal D}_k$ corresponds in the standard way to a $\mathbb{G}$-martingale 
measure $\mathbb{Q}^{L}$. 
\end{proof}

\begin{corollary} 
Define
\begin{eqnarray*}  &&
\tau' := \min \{ \tau_{k-1}: k\ge 1,\mbox{ the  NFLVR-condition (1) fails } \} \\ &&
\tau'' := \inf \{ \overline\tau_k(h): k\ge 0, h\in {\mathcal D}_k, \mbox{ and  NFLVR-condition  (2) fails } \} \\ &&
\tau^{FLVR} := \bigl(\tau'\wedge \tau''\wedge T \bigr)
\end{eqnarray*} 
There are not arbitrage possibilities for a $\mathbb{G}$- insider who is
restricted to trade in the interval $[0,\tau^{FLVR})$,
equivalently, there is an equivalent $\mathbb{G}$- martingale measure
for the stopped process $(\widetilde S_{t\wedge \tau^{FLVR}})$.
\end{corollary}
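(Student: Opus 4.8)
The plan is to localize the martingale-measure construction of Theorem~\ref{NFLVR} to the stochastic interval $[0,\tau^{FLVR})$ and then invoke the standard general-theory equivalence (NFLVR $\Leftrightarrow$ existence of an equivalent $\sigma$-martingale measure) for the stopped discounted price process $(\widetilde S_{t\wedge\tau^{FLVR}})$. First I would observe that $\tau'$ is a $\mathbb G$-stopping time: for each $k$, the failure of NFLVR-condition~(1) at the jump time $\tau_{k-1}$ is decidable from the random set $\hat{\mathcal Y}^{(Y_{\tau_{k-1}})}_k(\omega)$, which by the remark preceding Theorem~\ref{NFLVR} is $\mathbb G_{\tau_{k-1}}$-measurable, so $\{\tau'=\tau_{k-1}\}\in G_{\tau_{k-1}}$. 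Likewise $\tau''$ is a $\mathbb G$-stopping time, being an infimum of the $\mathbb G$-predictable times $\overline\tau_k(h)$ restricted to the $\mathbb G_{\tau_{k-1}}$-measurable event that NFLVR-condition~(2) fails there. Hence $\tau^{FLVR}=\tau'\wedge\tau''\wedge T$ is a $\mathbb G$-stopping time.

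Next I would build the density process. On the interval $[0,\tau^{FLVR})$, by the very definition of $\tau'$ and $\tau''$, NFLVR-conditions~(1) and~(2) of Theorem~\ref{NFLVR} hold for every relevant $k$ and every $h\in{\mathcal D}_k$ with $\overline\tau_k(h)<\tau^{FLVR}$. So at each jump time $\tau_{k-1}<\tau^{FLVR}$ we can, measurably in $\omega$, select strictly positive solutions $\widetilde\lambda^{e,f}$ of the constrained system~(\ref{matrix system2}) for the totally inaccessible reachable states $f\in\hat{\mathcal Y}^{(Y_{\tau_{k-1}})}_k(\omega)$, and strictly positive solutions of the homogeneous system~(\ref{matrix system3}) for the accessible reachable states at each predictable time $\overline\tau_k(h)$. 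As in the proof of Theorem~\ref{NFLVR}, these selections assemble into a nonnegative $\mathbb G$-predictable field that, via the Dol\'eans–Dade exponential built from the $\mathbb G$-compensators computed in Proposition~\ref{G^L-insider-compensator} (the totally inaccessible part contributes the absolutely continuous piece $Z(\vartheta,s)\Lambda(ds)$, the accessible part the jumps at the $\overline\tau_k(h)$), defines a strictly positive $\mathbb G$-local martingale $Z^L$ on $[0,\tau^{FLVR})$; stopping at $\tau^{FLVR}$ makes it a genuine positive $\mathbb G$-martingale on $[0,T]$ because after $\tau^{FLVR}$ we freeze everything and no further jumps of the stopped price occur. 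Setting $d\mathbb Q^L_t = Z^L_{t\wedge\tau^{FLVR}}\,d\mathbb P_t$ then makes $(\widetilde S_{t\wedge\tau^{FLVR}})$ a $\mathbb Q^L$-(local)martingale, with $\mathbb Q^L\sim\mathbb P$ by strict positivity. By the general equivalence recalled at the start of Section~6 (Delbaen–Schachermayer), this is exactly NFLVR for the stopped process, i.e.\ the $\mathbb G$-insider restricted to trade on $[0,\tau^{FLVR})$ has no arbitrage.

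The step I expect to be the main obstacle is verifying that the locally constructed density $Z^L$ really extends to a \emph{true} (not merely local) $\mathbb G$-martingale after stopping at $\tau^{FLVR}$, and in particular that $\tau^{FLVR}$ is not a jump time at which the construction breaks down; one must check that the predictable times $\overline\tau_k(h)$ appearing strictly before $\tau^{FLVR}$ are exactly the accessible jumps that need a compensator adjustment, that the accumulation of the countably many predictable jump times does not destroy the martingale property (here finiteness of $N_T$ and the uniform control from $\sum_h \mathbb P(H_T=h\mid L_T,F_{s-})=1$ are used), and that at $\tau^{FLVR}=\tau''$ the stopped price $\widetilde S_{\cdot\wedge\tau^{FLVR}}$ indeed has no jump contributed \emph{at} $\tau''$ that is left uncompensated — i.e.\ one works on the half-open interval $[0,\tau^{FLVR})$ and the jump \emph{at} $\tau^{FLVR}$ is excluded by the stopping. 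I would handle this by the standard argument that a nonnegative local martingale with $Z^L_0=1$ is a supermartingale, and that the particular Dol\'eans–Dade exponential here is a true martingale because its jumps are $>-1$ and its compensator has integrable variation on $[0,T]$ (the intensities $\lambda^{ek}$ are bounded, $\Lambda$ is a finite measure on $[0,T]$, and only finitely many predictable jumps occur), so no mass is lost; alternatively one can argue $\mathbb Q^L$-by-$\mathbb Q^L$ directly that $\mathbb E_{\mathbb Q^L}[Z^L_{\tau^{FLVR}}]=1$.
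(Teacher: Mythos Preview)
The paper does not give a separate proof of this corollary; it is stated immediately after Theorem~\ref{NFLVR} as a direct consequence, and the text then turns to exhibiting explicit FLVR strategies for an insider trading \emph{after} $\tau'$ or $\tau''$. Your proposal is therefore not competing with an existing argument but supplying the details the paper leaves implicit, and the route you take---localize the construction of Theorem~\ref{NFLVR} to $[0,\tau^{FLVR})$, observe that by the very definition of $\tau',\tau''$ the constrained systems (\ref{matrix system2}) and (\ref{matrix system3}) are solvable there, and assemble the resulting intensities into a Dol\'eans--Dade density---is exactly the ``standard way'' the paper's one-line proof of Theorem~\ref{NFLVR} alludes to.

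Two minor remarks. First, your check that $\tau''$ is a $\mathbb G$-stopping time as an infimum of predictable times is fine because the index set $\{(k,h):h\in{\mathcal D}_k\}$ is countable ($\Xi$ is countable) and the filtration is right-continuous. Second, the concern you flag about the density being a \emph{true} martingale is legitimate but, as you note, is handled here by the fact that on $[0,T]$ only finitely many jumps of $N$ occur $\mathbb P$-a.s.\ and the underlying intensities $\lambda^{ek}$ are bounded constants, so the stochastic exponential has bounded jumps and finite-variation compensator; the paper evidently regards this as routine and does not comment on it.
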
 

We describe an arbitrage strategy, for an insider which is allowed to
trade after the $\mathbb{G}$-stopping times  $\tau'$ or  $\tau''$,
when $P(\tau^{FLVR} < \infty)>0$.
\begin{enumerate}
 \item 
For simplicity assume that $\tau'(\omega)= \tau_{k-1}(\omega)$ and
 the next jump time  $\tau_k$  is $\mathbb{G}$-totally inaccessible.

Let $Y_{\tau_{k-1}}(\omega)=e$.
 Since the non-homogeneous system $6.1$ has not strictly positive
solutions, by the separating hyperplane theorem there is  a 
vector $\xi=(\xi_1,\dots,\xi_m)$ such that for
\begin{eqnarray} \label{separating:hyperplane} 
\sum_{i}  \sum_{f\in {\mathcal A} }  \xi_i \gamma^{i ef} \lambda^{ef} + \sum_{i}\mu ^{ie } \xi_i > 0
\end{eqnarray}
for all  vectors $( \lambda^{ef}: f \in {\mathcal A} ) > 0 $. This implies that
\begin{eqnarray*} 
 \sum_{i}\mu ^{ie } \xi_i  > 0, \quad\mbox{ and } \quad 
\sum_{i}  \xi_i \gamma^{i ef} > 0 \; \forall f\in {\mathcal A}
\end{eqnarray*}

In a discounted world,
at any time
$s\in [ \tau_{k-1}, \tau_k)$, the insider starts to  play and
borrows $V=(\xi\cdot S_{\tau'})$   from the 
bank at zero interest rate, to buy a portfolio of stocks $S$ 
with weights $(\xi_1,\dots,\xi_m)$. The insider sells its' portfolio at
any time $t\in (s, \tau_k]$ and pays back is debt to the bank.
 Whether $\{ \tau_k = t \}$
or $\{\tau_k > t\}$,  from condition
(\ref{separating:hyperplane}) we see that 
the insider makes  a positive profit.

\item
Next we discuss the insider's strategy at the time $\tau''$.
Assume that $\tau_{k-1}(\omega)< \tau''(\omega) \le \tau_k(\omega) $ and let $Y_{\tau_{k-1}}(\omega)=e$.

Since the homogeneous system 
(\ref{matrix system3}) has not strictly positive solutions, by the separating hyperplane theorem
there is a vector $\xi \in \mathbb R^m$ such that
\begin{eqnarray} \label{separating:hyperplane2} 
\sum_{i}  \sum_{f\in {\mathcal A} }  \xi_i \gamma^{i ef} \lambda^{ef} > 0
\end{eqnarray}
for all  vectors $( \lambda^{ef}: f \in {\mathcal A} ) > 0 $. This implies 
\begin{eqnarray*} 
\sum_{i}  \xi_i \gamma^{i ef} > 0 \quad  \forall f\in {\mathcal A} .
\end{eqnarray*}
Since $\tau''$ is ${\mathcal F}_{\tau_{k-1}}$-measurable, the insider chooses
some $\varepsilon >0 $ small enough so that
$\tau_{k-1}(\omega)<   \tau''(\omega) - \varepsilon <    \tau''(\omega) \le \tau_k(\omega) $,
buys the portfolio $(\xi \cdot S_s )$ at time $s=(\tau''-\varepsilon)$, 
and sells the portfolio at time $\tau''$ after the jump, making a profit
\begin{eqnarray*} 
 \sum_{i} \sum_{f} \xi_i \gamma^{i ef} \Delta N^{ef}_{\tau''}  - \varepsilon \sum_{i} \xi_i \mu^{i e} 
\end{eqnarray*}
Since $\varepsilon$ is arbitrarily small the insider has a free lunch with vanishing risk,
regardless of the sign of $(\xi\cdot \mu^{e})$.
\end{enumerate}

\begin{example}\label{ex:only1+2}
Consider first the Example \ref{ex:one}. Using the notation from Remark \ref{remark3} we must compute 
$$
\bar \nu (d\ell , dt) = \mathbb P(\vartheta \in d\ell , N(dt) = 1\vert F_{t-} ) =  
\mathbb P(\nu \in d\ell \vert F_{t-})\mathbb P (N(dt) = 1 \vert F_{t-}, \vartheta \in d \ell ) 
$$
and 
$$
\widetilde \nu (d\ell , dt ) = \mathbb P(\vartheta  \in d\ell \vert F_{t-})\Lambda (dt) = \mathbb P (\vartheta  \in d\ell \vert F_{t-}) \mathbb P(N(dt)=1 \vert F_{t-})  .
$$
Recall from Example \ref{ex:one} that in fact we have here two independent Poisson processes $N^+ $ (resp. $N^-$)
counting the positive (resp. negative) jumps. We have 
$\mathbb P (N^+(dt)= 1 \vert F_{t-}) = \lambda ^+ dt $ and $\mathbb P (N^-(dt)= 1 \vert F_{t-}) = \lambda ^- dt $.
To compute the conditional probability $\mathbb P (N^+(dt) = 1 \vert F_{t-}, \vartheta \in d \ell )$ recall that 
$\vartheta = L_0 + \mu T + \beta ^+ N_T^+ + \beta ^- N^-_T $. Assume that $\beta ^+ , \beta ^- $ 
are such that with fixed $\vartheta = \ell $ the equation 
\begin{equation}\label{eq:ell}
\ell = \log (S_0 ) + \mu T + \beta ^+ n^+ + \beta ^- n^- 
\end{equation}
has a unique solution $(n^+ , n^- )$. We have then 
that 
$$
\mathbb P (N^+(dt) = 1 \vert F_{t-}, \vartheta \in d \ell ) = \mathbb P (N^+(dt) = 1 \vert F_{t-}, N^+ _T = n^+).
$$
Recall that for a Poisson process $N$ the compensator of $N$ in the filtration $\mathbb F \wedge \sigma (N_T) $
is 
$$
\mathbb P (N(dt) = 1 \vert F_{t-}, N_T ) = \frac{N_T - N_{t-}}{T-t} dt 
$$
(see for example \cite{Gasbarra2006}).
This gives 
\begin{equation}\label{eq:compensator-KH}
\mathbb P (N^+(dt) = 1 \vert F_{t-}, \vartheta \in d \ell ) =  
\mathbb P (N^+(dt) = 1 \vert F_{t-}, N_T^+ ) = \frac{N^+_T - N^+ _{t-}}{T-t} dt .
\end{equation}

In this model there is always arbitrage, after the last jump of $N^+ $ or $N^-$. 

Note that in the special case of $\frac{\beta ^+}{-  \beta ^-} = \frac{k_1}{k_2} $ for some $k_1, k_2\in \mathbb N $,  the equation 
(\ref{eq:ell}) does not uniquely determine the pair $(n^+ , n^-) $, and then there might also be no-arbitrage.
We refer to \cite{Kohatsu-Higa2007} for a detailed analysis of this model using the expected utility of the insider. 
\end{example}

\section*{Acknowledgments}
We are grateful to an anonymous referee for a detailed and constructive reading of the first 
versions. I.M. was supported by Magnus Ehrnrooth foundation and FGSS, and I. M. and E.V. were
supported by the Academy of Finland grants 210465 and 127634.


\end{document}